\documentclass[a4paper,10pt]{article}
\usepackage[bbgreekl]{mathbbol}
\usepackage[small]{titlesec}
\usepackage[utf8]{inputenc}
\usepackage[T1]{fontenc}  
\usepackage[]{mdframed}
\usepackage{mathrsfs}
\usepackage[backend=biber,style=alphabetic,natbib=false,giveninits=true,doi=true,isbn=false,url=false,date=year,maxbibnames=99,sorting=nty,defernumbers=true]{biblatex}
\DeclareNameAlias{default}{family-given}

\DeclareFieldFormat[article]{title}{#1}
\renewbibmacro{in:}{}
\DeclareFieldFormat[article]{pages}{#1}
\DeclareFieldFormat{journal}{#1}
\renewcommand\bf\bfseries

\renewbibmacro*{volume+number+eid}{%
	\printfield{volume}%
	\setunit*{\addnbspace}
	\printfield{number}%
	\setunit{\addcomma\space}%
	\printfield{eid}}
\DeclareFieldFormat[article]{number}{\mkbibparens{#1}}
\DeclareFieldFormat[article]{volume}{\textbf{#1}}
\DeclareFieldFormat{year}{\mkbibparens{#1}}
\DeclareBibliographyDriver{article}{%
	\printnames{author}:%
	\newunit\newblock
	\printfield{title}%
	\newunit\newblock
	\printfield{journaltitle}
	\newunit
	\iffieldundef{number}{\printfield{volume}}{\printfield{volume}\addspace\printfield{number}}
	\addcomma\addspace\printfield{pages}\addspace
	\printfield{year}
}
\AtEveryBibitem{\clearfield{month}}
\AtEveryBibitem{\clearfield{day}}
\usepackage[english]{babel}
\usepackage[T1]{fontenc}
\usepackage{csquotes}
\usepackage{bbm}
\usepackage[leqno]{amsmath}
\usepackage{amsfonts,amsthm,amsbsy,amssymb,dsfont,stmaryrd}
\usepackage[dvipsnames]{xcolor}
\usepackage{braket}

\makeatletter
\newcommand{\leqnomode}{\tagsleft@true\let\veqno\@@leqno}
\newcommand{\reqnomode}{\tagsleft@false\let\veqno\@@eqno}
\makeatother

\usepackage{mathtools}
\usepackage[makeroom]{cancel}

\numberwithin{equation}{section}

\newcommand\myshade{85}
\colorlet{mylinkcolor}{violet}
\colorlet{mycitecolor}{YellowOrange}
\colorlet{myurlcolor}{Aquamarine}

\usepackage[left=2.5cm,right=2.5cm,top=2.5cm,bottom=2.5cm]{geometry}
\usepackage[unicode=true,pdfusetitle,
bookmarks=true,bookmarksnumbered=false,bookmarksopen=false,
breaklinks=false,pdfborder={0 0 1},backref=false,
linkcolor  = mylinkcolor!\myshade!black,
citecolor  = mycitecolor!\myshade!black,
urlcolor   = myurlcolor!\myshade!black,
colorlinks = true,
]
{hyperref}
\usepackage[nameinlink]{cleveref}

\usepackage{tikz}
\usetikzlibrary{arrows}
\usetikzlibrary{intersections}

\usepackage{graphicx}
\usepackage{caption}
\usepackage{slashed}
\definecolor{ct_black}{HTML}{000000}
\definecolor{ct_orange}{HTML}{ED872D}
\definecolor{ct_purple}{HTML}{7A68A6}
\definecolor{ct_blue}{HTML}{348ABD}
\definecolor{ct_turquoise}{HTML}{188487}
\definecolor{ct_red}{HTML}{E32636}
\definecolor{ct_pink}{HTML}{CF4457}
\definecolor{ct_green}{HTML}{467821}

\definecolor{ct2_green}{HTML}{9FF781}
\definecolor{ct2_green_dark}{HTML}{088A08}

\theoremstyle{plain}
\newtheorem{thm}{\protect\theoremname}[section]
\theoremstyle{plain}
\newtheorem{lem}[thm]{\protect\lemmaname}
\theoremstyle{plain}

\theoremstyle{plain}
\newtheorem{prop}[thm]{\protect\propositionname}
\theoremstyle{plain}

\theoremstyle{remark}
\newtheorem{rem}[thm]{\protect\remarkname}

\theoremstyle{definition}

\theoremstyle{plain}
\newtheorem{example}[thm]{\protect\examplename}
\providecommand{\assumptionname}{Assumption}
\providecommand{\claimname}{Claim}
\providecommand{\corollaryname}{Corollary}
\providecommand{\definitionname}{Definition}
\providecommand{\lemmaname}{Lemma}
\providecommand{\propositionname}{Proposition}
\providecommand{\remarkname}{Remark}
\providecommand{\theoremname}{Theorem}
\providecommand{\examplename}{Example}

\crefname{section}{Section}{Sections}
\crefname{appendix}{Appendix}{Appendices}
\crefname{figure}{Figure}{Figures}
\crefname{assumption}{Assumption}{Assumptions}
\crefname{thm}{Theorem}{Theorems}
\crefname{lem}{Lemma}{Lemmas}
\crefformat{equation}{(#2#1#3)}
\crefname{table}{Table}{Tables}

\crefrangelabelformat{equation}{(#3#1#4--#5#2#6)}

\crefmultiformat{equation}{(#2#1#3}{, #2#1#3)}{#2#1#3}{#2#1#3}
\Crefmultiformat{equation}{(#2#1#3}{, #2#1#3)}{#2#1#3}{#2#1#3}

\newtheorem*{lem*}{\protect\lemmaname}

\newcommand{\ee}{\operatorname{e}}
\newcommand{\ii}{\operatorname{i}}

\newcommand{\ZZ}{\mathbb{Z}}
\newcommand{\TT}{\mathbb{T}}

\newcommand{\NN}{\mathbb{N}}
\newcommand{\RR}{\mathbb{R}}

\newcommand{\calF}{\mathcal{F}}

\newcommand{\calO}{\mathcal{O}}
\newcommand{\calR}{\mathcal{R}}

\newcommand{\calH}{\mathcal{H}}

\newcommand{\calJ}{\mathcal{J}}

\newcommand\norm[1]{\left\lVert#1\right\rVert}

\newcommand{\dif}{\operatorname{d}\!} 
\newcommand{\tr}{\operatorname{tr}}

\newcommand{\ve}{\varepsilon}

\newcommand{\sgn}{\operatorname{sgn}}

\usepackage{environ}

\NewEnviron{malign}{%
	\begin{align}\begin{split}
			\BODY
	\end{split}\end{align}
}

\newcommand{\eq}[1]{\begin{align*}#1\end{align*}}
\newcommand{\eql}[1]{\begin{align}#1\end{align}}

\usepackage{tikz}

\title{Euler Topological Metals in 1D}
\author{\href{mailto:jc1220@math.princeton.edu}{Yichen Hu}\\
	{\footnotesize Department of Physics, Florida Atlantic University}\\
	\href{mailto:shapiro@math.princeton.edu}{Jacob Shapiro}\\
	{\footnotesize Department of Mathematics, Princeton University}
}

\begin{document}
	\reqnomode
	
	\maketitle
	\begin{abstract}
		We give a rigorous one-dimensional formulation of Kane's transport proposal for probing the Euler characteristic of a Fermi sea. For a clean, translation-invariant continuum Hamiltonian with real-analytic dispersion, a Fermi sea confined to a finite momentum range, and nonzero Fermi velocity at every Fermi point, we analyze a particular Abel-regularized transport response (obtained by tracing over the Fermi sea) and prove that its large-time limit equals the Euler characteristic. We derive an explicit finite-time formula, show that quantization requires the large-time limit, and obtain a convergence rate under additional nonstationarity assumptions. We also prove the lattice analog, where the sharp commutator is trace-class and a completely filled band contributes zero. Finite-volume calculations illustrate the prescribed order of the thermodynamic and large-time limits.
	\end{abstract}
	\section{Introduction}
    Topological phenomena in condensed matter physics are most commonly
    associated with gapped phases, where an invariant of the occupied Bloch
    bundle, such as a Chern number, is detected by a quantized transport
    coefficient \cite{Hasan_Kane_2010}. Metals have no such spectral gap, but
    they possess another natural geometric object: the occupied region in
    momentum space. Changes in the topology of this region are the classical
    Lifshitz transitions, whose thermodynamic and kinetic signatures have long
    been studied \cite{Lifshitz1960,BlanterEtAl1994}. What had been missing was
    a direct observable which returns a topological invariant of the Fermi sea
    itself.

    Kane proposed such an observable in his 2021 preprint, highlighted at the
    time by Beenakker \cite{Beenakker2021Euler}, and subsequently published in
    \cite{Kane_22_PhysRevLett.128.076801}. The proposal extends the familiar
    one-dimensional Landauer conductance \cite{Landauer1957}. In one dimension
    the number of occupied ballistic channels is the number of components of
    the Fermi sea and hence its Euler characteristic. Kane showed that in two
    dimensions the corresponding invariant is obtained instead from a
    second-order, three-terminal response: electron-like Fermi surfaces and
    hole-like Fermi surfaces enter with opposite signs. More generally, the
    proposal suggests a hierarchy in which a $d$-dimensional Fermi sea is
    probed by a $d$th-order response of $d+1$ regions meeting at a point. The
    cases $d=1,2$ were established in \cite{Kane_22_PhysRevLett.128.076801};
    the higher-dimensional transport problem was left open.

    Several subsequent works developed different aspects of this idea. Yang
    and Zhai proposed a two-dimensional ultracold-atom implementation and
    quantified the effects of finite pulses, temperature, and trap geometry
    \cite{YangZhai2022}. Zhang formulated a one-dimensional linear response in
    a trapped gas and its nonlinear higher-dimensional analogues
    \cite{Zhang2023}. Tam, Claassen, and Kane found that the same Euler
    characteristic controls a multipartite entanglement quantity associated
    with $d+1$ regions: for odd $d$ this is a multipartite mutual information,
    while for even $d$ a charge-weighted version is required
    \cite{TamClaassenKane2022}. In a different direction, Andreev bound states
    in a long Josephson junction furnish a rectified transport probe in two
    dimensions \cite{TamKane2023Andreev,TamDeBeuleKane2023}, and the Fermi-sea
    Euler characteristic can be inherited by the invariant of a weak-pairing
    topological superconductor \cite{YangLiLi2023,Jia2025}.

    Equal-time correlations provide a further formulation which is not tied
    to transport. The connected $(d+1)$-point density correlation of a free
    Fermi gas has a universal long-wavelength term proportional to the Euler
    characteristic \cite{TamKane2024}. This prediction was recently tested in
    a two-dimensional gas of ${}^{6}\mathrm{Li}$ using single-atom-resolved
    measurements of three- and four-point correlations
    \cite{DaixEtAl2025}. The limitations of the quantization are now also
    becoming visible: Berry curvature preserves the homogeneous
    two-dimensional transport formula but, together with spatial
    inhomogeneity, produces a non-quantized correction \cite{YangLi2026}; and
    in an interacting two-dimensional Fermi liquid the singular three-point
    correlation survives with a coefficient renormalized by the Landau
    parameters \cite{TamKane2026,Kane2026ThreePoint}.

    These developments show that the Euler characteristic is not exclusively
    a transport invariant. It may appear in nonlinear response, in equal-time
    density correlations and charge fluctuations \cite{TamEtAl2026ChargeFluctuations}, or in specially arranged entanglement quantities.
    The last qualification is important. The leading bipartite entanglement
    entropy of a Fermi sea is governed by a Widom-type geometric integral over
    the real-space boundary and the Fermi surface
    \cite{GioevKlich2006}; it does not by itself isolate the Euler
    characteristic. The topological term emerges only after a multipartite
    combination, and in even dimensions after charge weighting.

    The purpose of the present paper is narrower. We give a mathematical
    formulation of the one-dimensional continuum response underlying Kane's
    argument. A continuous-translation-invariant single-particle Hamiltonian
    on $L^2(\RR^d)$ is of the form
    \eq{H=f(P),\qquad P=-\ii\nabla\,,}
    and at Fermi energy $E_F$ its Fermi sea is
    \eq{\Sigma_{E_F}:=\Set{p\in\RR^d\mid f(p)\leq E_F}\,.}
    We assume below that the one-dimensional dispersion $f:\RR\to\RR$ is real analytic, that
    $E_F$ is a regular value, and that $\Sigma_{E_F}$ is compact. In one
    dimension $\chi(\Sigma_{E_F})=b_0(\Sigma_{E_F})$ is simply the number of
    its connected components. In two dimensions it is
    $b_0(\Sigma_{E_F})-b_1(\Sigma_{E_F})$, where $b_1(\Sigma_{E_F})$ is the number of holes. For general $d$ one has \eql{\chi(\Sigma_{E_F}) = \sum_{j=0}^d (-1)^{j}b_j(\Sigma_{E_F})} where $b_j(\Sigma_{E_F})$ is the $j$th Betti number of the Fermi sea.

    Even in one dimension, the formal Kubo expression contains sharp spatial
    switches and is not an ordinary trace in infinite volume. Our main result,
    \cref{thm:main theorem}, identifies an Abel-regularized and
    Fermi-compressed linear response whose large-time limit equals
    $\chi(\Sigma_{E_F})$. We also quantify why the large-time limit is
    necessary and give the analogous argument for a lattice Hamiltonian in
    \cref{sec:discrete}.

    The higher-dimensional questions remain substantial. In two dimensions
    one would like a rigorous nonlinear Kubo formula for three intersecting
    spatial switches, with a controlled infinite-volume regularization and
    order of limits. In three dimensions it is not yet clear which transport
    geometry produces the Euler characteristic, or whether a genuinely linear
    response with a more elaborate observable can replace the expected
    third-order response. More generally, it remains to determine which of the
    transport, density-correlation, and multipartite-entanglement formulations
    survive disorder and interactions, and whether their quantized terms are
    different realizations of a single index theorem.

    This paper is organized as follows.  In \cref{sec:continuum} we derive the formal Kubo response, identify the trace-class obstruction in the continuum, and state our regularized formulation.  After giving semiclassical and Boltzmann motivations, we prove the main theorem by direct momentum-space analysis and quantify the approach to the quantized large-time limit.  In \cref{sec:discrete} we establish the corresponding lattice result, where the sharp commutator is already trace-class and a completely filled band contributes zero.  Finally, \cref{sec:numerics} illustrates the finite-time and finite-volume effects, the role of boundary conditions, and the behavior of the response for the one-dimensional Anderson model.
\section{The one-dimensional theory}\label{sec:continuum}
    
    \subsection{The experimental setup--formal manipulations}

Consider a non-interacting electron in a continuous one-dimensional infinite
wire, with single-particle Hilbert space $L^2(\RR)$. We work in units
$\hbar=1$ and write $-e$, with $e>0$, for the electric charge of the
electron. Let
\eq{\Lambda(x):=\chi_{(0,\infty)}(x)\,.}

At time $t_0\in\RR$, we apply the scalar-potential impulse
\eq{\phi(x,t)=-\frac{E_0}{e}\Lambda(x)\delta(t-t_0),
\qquad E_0>0\,.}
The corresponding electric field and single-particle perturbation are
\eq{\mathcal{E}(x,t)
=-\partial_x\phi(x,t)
=\frac{E_0}{e}\delta(x)\delta(t-t_0)}
and
\eq{H(t)=H+E_0\Lambda(X)\delta(t-t_0)\,.}
Thus the space-time integrated electric field is $E_0/e$.

The observable
\eq{N_R:=\Lambda(X)}
measures the number of particles on the right half-line, while the
corresponding electric-charge observable is
\eq{Q_R:=-e\Lambda(X)\,.}
Indeed, the expected particle number in a region $S\subseteq\RR$ is
$\tr(\rho\chi_S(X))$. At zero temperature, the unperturbed one-particle
density matrix $\rho$ is the Fermi projection
\eq{P_F:=\chi_{(-\infty,E_F)}(H)\,.}

In finite volume, Kubo's formula
\cite{Kubo_57_I_doi:10.1143/JPSJ.12.570,
Kubo_57_II_doi:10.1143/JPSJ.12.1203,Bruus2004-lk}
gives, to first order in $E_0$,
\eql{\tr\left((\rho(t)-P_F)B\right)
&=-\ii E_0\int_{-\infty}^{t}\dif{t'}\,
\tr\left(
\ee^{-\ii(t-t')H}
\left[\Lambda(X)\delta(t'-t_0),P_F\right]
\ee^{\ii(t-t')H}B
\right)
+\calO(E_0^2)\notag\\
&=-\ii E_0\chi_{(0,\infty)}(t-t_0)
\tr\left(
\ee^{-\ii(t-t_0)H}
[\Lambda(X),P_F]
\ee^{\ii(t-t_0)H}B
\right)
+\calO(E_0^2)\,.
\label{eq:Kubo}}

Taking $B=N_R=\Lambda(X)$ and using $[H,P_F]=0$, formal cyclicity of the
trace gives
\eq{\tr\left(
\ee^{-\ii(t-t_0)H}
[\Lambda(X),P_F]
\ee^{\ii(t-t_0)H}\Lambda(X)
\right)
=
-\tr\left(
P_F\left[
\ee^{-\ii(t-t_0)H}\Lambda(X)\ee^{\ii(t-t_0)H},
\Lambda(X)
\right]
\right).}
Consequently, the formal particle-number response is
\eq{\Delta N_R(t)
:=\tr\left((\rho(t)-P_F)\Lambda(X)\right)
=
\ii E_0\chi_{(0,\infty)}(t-t_0)
\tr\left(
P_F\left[
\ee^{-\ii(t-t_0)H}\Lambda(X)\ee^{\ii(t-t_0)H},
\Lambda(X)
\right]
\right)
+\calO(E_0^2)\,.}

The electric-charge response has the opposite sign because $Q_R=-eN_R$:
\eq{\frac{2\pi}{eE_0}\Delta Q_R(t)
=
-2\pi\ii\chi_{(0,\infty)}(t-t_0)
\tr\left(
P_F\left[
\ee^{-\ii(t-t_0)H}\Lambda(X)\ee^{\ii(t-t_0)H},
\Lambda(X)
\right]
\right)
+\calO(E_0)\,.}
This is the origin of both the sign and the factor $2\pi$ in the response
considered below.

There are two obstructions to interpreting the preceding expressions as
ordinary infinite-volume traces. First, $P_F\Lambda(X)$ is not trace-class,
so the equilibrium charge on the half-line is infinite. Second, the
operators occurring in the linear term need not be trace-class, and hence
the cyclicity used above is not justified. These manipulations are therefore
only used to identify the quantity that must be regularized.

For $\ve>0$, replace the sharp switch by
\eq{\Lambda_\ve(x):=\chi_{(0,\infty)}(x)\ee^{-\ve x}}
and compress the commutator by $P_F$ on both sides. For this regularized
expression, the trace is well-defined and
\eq{&-2\pi\ii\tr\left(
P_F\left[
\ee^{-\ii(t-t_0)H}\Lambda_\ve(X)\ee^{\ii(t-t_0)H},
\Lambda_\ve(X)
\right]P_F
\right)\\
&\qquad=
2\pi\ii\tr\left(
P_F\left[
\ee^{\ii(t-t_0)H}\Lambda_\ve(X)\ee^{-\ii(t-t_0)H},
\Lambda_\ve(X)
\right]P_F
\right).}
Indeed, this follows by conjugating the first commutator by
$\ee^{\ii(t-t_0)H}$, using $[H,P_F]=0$, and applying cyclicity to the
resulting trace-class operator. Thus the backward time evolution produced
by Kubo's formula, together with the minus sign from the electron charge,
agrees exactly with the forward time evolution and positive normalization
in \cref{thm:main theorem}.

For notational convenience, from now on we take $t_0=0$ and consider the
limit $t\to\infty$.

We analyze this particular regularized response functional; we do not derive
it here as a thermodynamic limit of the finite-volume Kubo expression or
prove that it is independent of the choice of regularization.

\begin{rem}[Trace class in the continuum]
\label{rem:continuum trace class}
In the continuum, smoothness of the dispersion relation is not enough to
make the relevant commutator trace-class. What matters is also its behavior
over the entire momentum axis.

A theorem of Peller
\cite[Theorem~5.7]{Peller2024Besov}, applied after Fourier transform,
has the following simple consequence: if
\eq{
\left[\Lambda(X),\ee^{-\ii t f(P)}\right]\in\calJ_1
}
for some fixed $t\neq 0$, then the phase
$p\mapsto\ee^{-\ii t f(p)}$ must have finite total variation. For a
differentiable real-valued dispersion, this requires
\eq{
\int_{\RR}\dif{p}\,
\left|\partial_p\ee^{-\ii t f(p)}\right|
=
|t|\int_{\RR}\dif{p}\,|f'(p)|
<\infty.
}
This condition already rules out the usual proper dispersions satisfying
$f(p)\to+\infty$ as $p\to\pm\infty$: such a function necessarily has
infinite total variation. This remains true no matter how smooth or
analytic $f$ is.

For example, for the free-particle dispersion $f(p)=p^2$,
\eq{
\int_{\RR}\dif{p}\,
\left|\partial_p\ee^{-\ii t p^2}\right|
=
2|t|\int_{\RR}\dif{p}\,|p|
=
\infty.
}
Consequently, for every $t\neq 0$,
\eq{
\left[\Lambda(X),\ee^{-\ii tP^2}\right]\notin\calJ_1.
}
Thus the standard locality argument based on the trace-classness of
$\left[\Lambda(X),\ee^{-\ii tH}\right]$ is unavailable for the usual
continuum Hamiltonians. This is why \cref{thm:main theorem} instead uses
a regularized switch and compresses the commutator to the Fermi sea.

For completeness, Peller's exact criterion in the present setting is
\eql{
\left[\Lambda(X),\ee^{-\ii t f(P)}\right]\in\calJ_1
\quad\Longleftrightarrow\quad
\int_{\RR}\frac{\dif{h}}{|h|^2}
\int_{\RR}\dif{p}\,
\left|
\ee^{-\ii t f(p+2h)}
-2\ee^{-\ii t f(p+h)}
+\ee^{-\ii t f(p)}
\right|
<\infty.
\label{eq:Peller criterion}
}
The main point is that this is a global condition in momentum, rather
than a local smoothness condition.

A convenient, although stronger than necessary, sufficient assumption is
that, for some constant $c$,
\eq{
\int_{\RR}
\left(
|f(p)-c|+|f'(p)|+|f''(p)|
\right)\dif{p}
<\infty.
}
Equivalently, $f-c\in W^{2,1}(\RR)$. Under this assumption,
$\ee^{-\ii t f}-\ee^{-\ii t c}\in W^{2,1}(\RR)$, and Peller's criterion
shows that the commutator is trace-class for every fixed $t$.
\end{rem}

    \subsection{The main theorem}


    Our main theorem will be 
    \begin{thm}[The regularized response is the Euler characteristic]\label{thm:main theorem}
        Let $H=f(P)$, where $f:\RR\to\RR$ is real analytic. Assume that $E_F$ is a regular value of $f$ and that the Fermi sea
        \eql{\label{eq:Fermi sea}\Sigma_{E_F}=\Set{p\in\RR | f(p)\leq E_F}\,.}
        is compact. Then the regularized dimensionless response equals the Euler characteristic
        \eql{\lim_{t\to\infty}\lim_{\ve\to0^+}
2\pi\ii\tr\left(
P_F\left[
\ee^{\ii tH}\Lambda_\ve(X)\ee^{-\ii tH},
\Lambda_\ve(X)
\right]P_F
\right)
=\chi(\Sigma_{E_F})\,.
\label{eq:linear response}}
        
    \end{thm}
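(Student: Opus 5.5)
The plan is to diagonalize everything in momentum space and reduce the trace to an explicit double integral over $\Sigma_{E_F}\times\RR$. After Fourier transform, $P_F=\chi_{\Sigma}(P)$ with $\Sigma:=\Sigma_{E_F}$. The operator $\ee^{\ii tH}$ acts by multiplication by $\ee^{\ii tf(p)}$. The switch $\Lambda_\ve(X)$ becomes an integral operator with kernel $K_\ve(p-p')$, where $K_\ve(q)=\frac{1}{2\pi}\frac{1}{\ve+\ii q}$ up to the sign convention. Since $K_\ve(q)K_\ve(-q)=\frac{1}{4\pi^2}\frac{1}{\ve^2+q^2}$, I expect
\eq{
2\pi\ii\tr\left(P_F\left[\ee^{\ii tH}\Lambda_\ve\ee^{-\ii tH},\Lambda_\ve\right]P_F\right)
=-\frac{1}{\pi}\int_{\Sigma}\dif{p}\int_{\RR}\dif{p'}\,
\frac{\sin\left(t(f(p)-f(p'))\right)}{\ve^2+(p-p')^2}\,.
}
Trace-class membership for fixed $\ve$ should follow from $\Sigma$ being compact: $P_F\Lambda_\ve$ is Hilbert--Schmidt, because $\int_\Sigma\int_\RR|K_\ve|^2<\infty$. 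This makes the trace a genuine integral of the kernel on the diagonal, and justifies the formula above.

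Next I split $\RR=\Sigma\cup\Sigma^c$ in the $p'$ variable. On $\Sigma\times\Sigma$ the integrand is absolutely integrable for fixed $\ve>0$ and antisymmetric under $p\leftrightarrow p'$, so that contribution vanishes identically. On $\Sigma\times\Sigma^c$ I let $\ve\to0^+$ by dominated convergence. Away from the diagonal the weight $(p-p')^{-2}$ is integrable at infinity. Near a Fermi point $a$ I use $|\sin(t\Delta f)|\le t\,C|p-p'|$, which leaves a majorant $\propto 1/(|p-a|+|p'-a|)$ on the two sides of $a$; this is integrable in two dimensions. The result is
\eq{
R(t):=-\frac{1}{\pi}\int_{\Sigma}\dif{p}\int_{\Sigma^c}\dif{p'}\,
\frac{\sin\left(t(f(p)-f(p'))\right)}{(p-p')^2}\,.
}

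For $t\to\infty$, I write $\partial\Sigma=\{a_1,\dots,a_{2n}\}$, where $n=\chi(\Sigma)$ is the number of component intervals and $f'(a_j)\neq0$ since $E_F$ is a regular value. I localize with a partition of unity into small squares around each $(a_j,a_j)$ and a remainder. In a square around $a_j$, I rescale $u=t|p-a_j|$, $v=t|p'-a_j|$ and linearize $f$. The term then tends to
\eq{
-\frac{1}{\pi}\int_0^\infty\!\!\int_0^\infty\frac{\sin(s_j(u+v))}{(u+v)^2}\dif{u}\dif{v}
=-\frac{1}{\pi}\int_0^\infty\frac{\sin(s_j w)}{w}\dif{w}
=-\frac{\sgn(s_j)}{2}\,,
}
where $s_j$ carries the sign of $f(p)-f(p')$. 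Inside $\Sigma$ we have $f\le E_F$ and outside $f>E_F$, so $s_j<0$ at every endpoint regardless of orientation. Each endpoint therefore contributes $+\tfrac12$, and the total is $2n\cdot\tfrac12=n=\chi(\Sigma)$. The linearization error, and the error from cutting the rescaled domain, must be controlled uniformly. I would do this by comparing with the exact linear phase and using that the error in the phase is $\calO(t(p-a)^2)$, which in rescaled variables is $\calO(w^2/t)$.

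The remainder, supported away from the diagonal boundary points, is an oscillatory integral $\int g(p,p')\sin(t\phi(p,p'))$ with $g\in L^1$ and $\phi(p,p')=f(p)-f(p')$. I claim it tends to $0$ by a Riemann--Lebesgue argument. Since $f$ is real analytic and nonconstant (compactness of $\Sigma$ and regularity of $E_F$ rule out constant $f$), $\nabla\phi$ vanishes only on the set $\{f'(p)=f'(p')=0\}$, which is a product of discrete sets and hence null. Therefore the pushforward of $g\,\dif p\,\dif p'$ under $\phi$ is absolutely continuous, and its Fourier transform decays.

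I expect the main obstacle to be this uniformity: simultaneously justifying the local linearization near $(a_j,a_j)$, where the integrand is only borderline integrable like $1/w$ at the diagonal corner, and the Riemann--Lebesgue step on the unbounded region $\Sigma\times\Sigma^c$, where $f'$ may have many zeros. I would first handle the corner by integrating exactly along the lines $p-p'=\text{const}$, reducing it to a one-dimensional Dirichlet-type integral $\int \sin(tw\,\cdot)/w$.
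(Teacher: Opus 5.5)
Your architecture matches the paper's proof. You have the Hilbert--Schmidt factorization and kernel computation giving $-\frac{1}{\pi}\int_\Sigma\int_\RR \sin(t(f(p)-f(p')))/(\ve^2+(p-p')^2)$, the antisymmetric cancellation on $\Sigma\times\Sigma$, and dominated convergence in $\ve$ with the corner majorant $C_t/|p-p'|$. You also have Riemann--Lebesgue via the coarea formula away from the points $(a_j,a_j)$, and a Dirichlet integral worth $+\tfrac12$ at every Fermi point. Your observation that $s_j<0$ at every endpoint is the same bookkeeping as the paper's $-\tfrac12\sgn f'(a_i)$ and $+\tfrac12\sgn f'(b_i)$.

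\textbf{The gap.} It sits exactly at the step you flag as the main obstacle, and your proposed remedy does not close it.
\begin{itemize}
\item The squares in your partition of unity must have a fixed side $\eta$. Otherwise the remainder has a $t$-dependent amplitude and Riemann--Lebesgue no longer applies.
\item On a fixed corner, comparing with the linear phase only gives $|\sin(t(f(p)-f(p')))-\sin(tf'(a)(p-p'))|/(p-p')^2\le Ct$. The resulting error is $\calO(t\eta^2)$, which does not tend to zero.
\item In your rescaled variables, the phase error $\calO(w^2/t)$ is small only for $w\ll\sqrt t$. The rescaled square, however, extends to $w\sim t\eta$.
\item In the zone $t^{-1/2}\lesssim|p-p'|\lesssim\eta$, the absolute value of the integrand integrates to something of order $\log t$. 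This zone can therefore be controlled only through cancellation, which your argument never uses.
\item Integrating along the lines $p-p'=x$ does not by itself give a Dirichlet integral. The phase still varies nonlinearly, and the Taylor error along each line is $\calO(tx^3)$. After integrating against $\dif x/x^2$ this is again $\calO(t\eta^2)$.
\end{itemize}

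\textbf{How the paper closes it.} It uses a $t$-dependent three-zone split in $x=p-p'$, $y=p+p'-2a$.
\begin{itemize}
\item For $0<x<\delta_t$, the Taylor lemma gives $2x\sin(tf'(a)x)$ up to a total error $\calO(t\delta_t^2)$. This produces the Dirichlet integral.
\item For $\delta_t<x<\eta$, one integrates by parts in $x$, including the boundary terms on $y=\pm x$. This works because $|\partial_x g|$ is bounded below, since $f'$ does not vanish near the Fermi point. The contribution is $\calO(1/(t\delta_t))$.
\item For $x\ge\eta$, Riemann--Lebesgue applies.
\end{itemize}
Any $\delta_t$ with $t\delta_t^2\to0$ and $t\delta_t\to\infty$ works; the paper takes $\delta_t=t^{-3/4}$.

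\textbf{An alternative that keeps your fixed partition of unity.} First straighten the phase. The map $\psi(\sigma)=(f(a+\sigma)-E_F)/f'(a)$ is a local analytic diffeomorphism with $\psi'(0)=1$, and it maps the corner to a corner. Substitute $\tilde p=\psi(p-a)$ and $\tilde p'=\psi(p'-a)$. The phase becomes exactly $tf'(a)(\tilde p-\tilde p')$. The amplitude becomes $(\tilde p-\tilde p')^{-2}+\calO(|\tilde p-\tilde p'|^{-1})$. The remainder term is integrable and vanishes by Riemann--Lebesgue, and the main term is an exact Dirichlet integral.

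\textbf{A minor point.} An empty Fermi sea is compact, so constant $f$ is not excluded by your hypotheses. That case is trivial because $P_F=0$, and $f$ is nonconstant whenever $\Sigma\neq\varnothing$.
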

The large-time limit is essential: the finite-time response is not
topologically quantized; see \cref{prop:finite-time response} below.

\begin{proof}[Semiclassical "proof"]
We use the forward Heisenberg evolution and the prefactor $2\pi\ii$ appearing in \cref{eq:linear response}.  Suppose that
\eq{\Sigma_{E_F}=\bigsqcup_{j=1}^{N}[a_j,b_j]\,.}
The Hamiltonian flow generated by $f(p)$ is
\eq{(x,p)\longmapsto\left(x+t f'(p),p\right)\,,}
so the principal symbol of
$\ee^{\ii tH}\Lambda_\ve(X)\ee^{-\ii tH}$ is
$\Lambda_\ve(x+t f'(p))$.  Using
\eq{\{a,b\}=\partial_xa\,\partial_pb-\partial_pa\,\partial_xb}
and the semiclassical correspondence $[A,B]\sim\ii\{a,b\}$, we obtain
\eq{&
\left[\ee^{\ii tH}\Lambda_\ve(X)\ee^{-\ii tH},
\Lambda_\ve(X)\right]\\
&\qquad\sim
\ii\left\{\Lambda_\ve(x+t f'(p)),\Lambda_\ve(x)\right\}\\
&\qquad=
-\ii t f''(p)\Lambda_\ve'(x+t f'(p))\Lambda_\ve'(x)\,.}

The semiclassical trace rule, together with the symbol
$\chi_{\Sigma_{E_F}}(p)$ of $P_F$, then gives
\eq{&
2\pi\ii\tr\left(
P_F\left[
\ee^{\ii tH}\Lambda_\ve(X)\ee^{-\ii tH},
\Lambda_\ve(X)
\right]P_F
\right)\\
&\qquad\sim
\int_{\Sigma_{E_F}}\dif{p}\int_{\RR}\dif{x}\,
t f''(p)\Lambda_\ve'(x+t f'(p))\Lambda_\ve'(x)\\
&\qquad=
\int_{\Sigma_{E_F}}\dif{p}\int_{\RR}\dif{x}\,
\partial_p\Lambda_\ve(x+t f'(p))\Lambda_\ve'(x)\\
&\qquad=
\sum_{j=1}^{N}\int_{\RR}\dif{x}\,\Lambda_\ve'(x)
\left(
\Lambda_\ve(x+t f'(b_j))
-\Lambda_\ve(x+t f'(a_j))
\right)\,.}

As $\ve\to0^+$, one has
$\Lambda_\ve\to\Lambda$ and
$\Lambda_\ve'\to\delta_0$ distributionally.  Since $E_F$ is a regular value,
\eq{f'(a_j)<0,\qquad f'(b_j)>0\,,}
and hence, for every $t>0$,
\eq{&
\lim_{\ve\to0^+}
2\pi\ii\tr\left(
P_F\left[
\ee^{\ii tH}\Lambda_\ve(X)\ee^{-\ii tH},
\Lambda_\ve(X)
\right]P_F
\right)\\
&\qquad\sim
\sum_{j=1}^{N}
\left(
\Lambda(t f'(b_j))-\Lambda(t f'(a_j))
\right)
=N\,.}
The semiclassical expression is already independent of $t>0$, so its
large-time limit gives the Euler characteristic of $\Sigma_{E_F}$.
This is not a proof, since the sharp switch has a singular symbol and the
semiclassical trace rule is not controlled uniformly as $\ve\to0^+$;
justifying this singular interface contribution requires precisely the
Wiener--Hopf trace formula, or equivalently the direct momentum-space
analysis, established in the actual proof below.
\end{proof}

    Before we turn to the actual proof of the theorem, let us motivate it using the Boltzmann equation.

    \subsection{Kane's Heuristics via the Boltzmann equation}
    The collisionless Boltzmann equation gives a semiclassical explanation of the normalization in \cref{eq:linear response}. We work in units $\hbar=1$, so that $h=2\pi$, and take $e>0$ to be the magnitude of the electron charge. Consider the idealized voltage pulse
    \eq{E(x,t)=\frac{2\pi}{e}\delta(t)\delta(x)\,.}
    Its time-integrated voltage is $h/e$.

    To avoid confusing the occupation function with the dispersion $f$, write
    \eql{\label{eq:Perb}n(x,p,t)=n_0(p)+\delta n(x,p,t),\qquad n_0(p)=\Lambda(E_F-f(p))\,.}
    The group velocity is $v_p=f'(p)$. Since the electron charge is $-e$, the collisionless Boltzmann equation is
    \eql{\label{eq:cBe}\left(\partial_t+v_p\partial_x-eE(x,t)\partial_p\right)n(x,p,t)=0\,.}
    Substituting \cref{eq:Perb} into \cref{eq:cBe} and retaining terms linear in the field gives
    \eq{\left(\partial_t+v_p\partial_x\right)\delta n(x,p,t)=eE(x,t)\partial_p n_0(p)\,.}
    The retarded solution is therefore
    \eq{\delta n(x,p,t)&=e\int_{-\infty}^{t}\dif{t'}\int_{\RR}\dif{x'}E(x',t')\delta\left(x-x'-v_p(t-t')\right)\partial_p n_0(p)\\
    &=2\pi\int_{-\infty}^{t}\dif{t'}\delta(t')\delta\left(x-v_p(t-t')\right)\partial_p n_0(p)\,.}

    For $t>0$,
    \eq{\int_0^\infty\dif{x}\int_{-\infty}^{t}\dif{t'}\delta(t')\delta\left(x-v_p(t-t')\right)=\Lambda(v_p)\,.}
    Hence the excess electric charge on the half-line $x>0$ is
    \eq{Q(t)&=-e\int_{\RR}\frac{\dif{p}}{2\pi}\int_0^\infty\dif{x}\,\delta n(x,p,t)\\
    &=-e\int_{\RR}\dif{p}\,\Lambda(v_p)\partial_p n_0(p)\,.}
    If $\Sigma_{E_F}=\bigsqcup_{i=1}^{N}[a_i,b_i]$, then, distributionally,
    \eq{\partial_p n_0(p)=\sum_{i=1}^{N}\left(\delta(p-a_i)-\delta(p-b_i)\right)\,.}
    Since $v_{a_i}<0<v_{b_i}$, we conclude that
    \eq{\frac{Q(t)}{e}=\sum_{i=1}^{N}\left(\Lambda(v_{b_i})-\Lambda(v_{a_i})\right)=N\,.}
    Thus the voltage pulse measures the Euler characteristic of the one-dimensional Fermi sea. This is Kane's original argument.

\subsection{The proof of the main theorem}

We first record the elementary geometry of the Fermi sea.

\begin{lem}\label{lem:Fermi intervals}
Under the hypotheses of \cref{thm:main theorem}, either $\Sigma_{E_F}$ is empty or
\eq{\Sigma_{E_F}=\bigsqcup_{i=1}^{N}[a_i,b_i]}
with $a_i<b_i<a_{i+1}$ and
\eq{f'(a_i)<0<f'(b_i)\,.}
In particular, $N$ is the Euler characteristic of $\Sigma_{E_F}$.
\end{lem}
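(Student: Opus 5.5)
The plan is to exploit the regular-value hypothesis to show that the boundary of $\Sigma_{E_F}$ is a finite set of points. Each of these points then has a definite sign of $f'$, and the components of $\Sigma_{E_F}$ are read off from how these points alternate.

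First, the level set $Z:=f^{-1}(E_F)$ is discrete. At each $p\in Z$ we have $f'(p)\neq 0$, so by the inverse function theorem $p$ is isolated in $Z$. Analyticity is not really needed here, only $C^1$ and regularity. Next we show $Z\subseteq\Sigma_{E_F}$ is compact, being closed in a compact set. Since $Z$ is compact and discrete, it is finite, say $Z=\{z_1<\dots<z_M\}$. If $M=0$, then $f-E_F$ has constant sign on $\RR$. In that case $\Sigma_{E_F}$ is either empty or all of $\RR$, and the latter is excluded by compactness. So $\Sigma_{E_F}=\emptyset$.

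Now assume $M\geq 1$. On each open interval $(z_k,z_{k+1})$, and on the two unbounded rays $(-\infty,z_1)$ and $(z_M,\infty)$, the function $f-E_F$ has no zeros and hence a constant sign. Because $\Sigma_{E_F}$ is compact, $f>E_F$ on both unbounded rays. At each $z_k$ we have $f'(z_k)\neq0$, so $f-E_F$ changes sign when crossing $z_k$: it goes from positive to negative when $f'(z_k)<0$, and from negative to positive when $f'(z_k)>0$. Consequently the signs on consecutive intervals alternate. Starting from $+$ on $(-\infty,z_1)$ and ending with $+$ on $(z_M,\infty)$, this forces $M=2N$ to be even, and $f<E_F$ exactly on $(z_{2i-1},z_{2i})$. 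Setting $a_i:=z_{2i-1}$ and $b_i:=z_{2i}$ gives
\eq{\Sigma_{E_F}=\bigsqcup_{i=1}^N[a_i,b_i],\qquad a_i<b_i<a_{i+1}\,.}
The direction of each crossing gives $f'(a_i)<0<f'(b_i)$. Strictly speaking, the derivative sign at $a_i$ follows because $f-E_F$ vanishes at $a_i$, is positive just to its left, and $f'(a_i)\neq0$.

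Finally, $\Sigma_{E_F}$ is a disjoint union of $N$ closed intervals. Each is contractible, and they are pairwise separated. Hence $b_0=N$ and $b_j=0$ for $j\geq1$, so $\chi(\Sigma_{E_F})=N$. The only step needing care is the finiteness of $Z$, where compactness is essential. Everything else is a sign-bookkeeping argument via the intermediate value theorem.
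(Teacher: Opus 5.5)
Your proof is correct and follows the same route as the paper: regularity makes $f^{-1}(E_F)$ discrete, compactness makes it finite, and the direction in which $f-E_F$ crosses zero at each point gives the sign of $f'$ there. Your explicit sign-alternation argument, which forces $M=2N$, simply spells out the paper's terser claim that every connected component is a non-degenerate compact interval.
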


\begin{proof}
The boundary of $\Sigma_{E_F}$ is contained in $f^{-1}(E_F)$. Since $E_F$ is a regular value, these points are isolated, and compactness makes their number finite. Every connected component is therefore a non-degenerate compact interval. At a left endpoint $f-E_F$ crosses from positive to negative, while at a right endpoint it crosses from negative to positive, which gives the asserted signs.
\end{proof}

The next lemma converts the operator expression into the oscillatory integral used below.

\begin{lem}\label{lem:trace formula}
For $t>0$ and $\Sigma_{E_F}\neq\varnothing$,
\eq{&\lim_{\ve\to0^+}2\pi\ii\tr\left(P_F\left[\ee^{\ii tH}\Lambda_\ve(X)\ee^{-\ii tH},\Lambda_\ve(X)\right]P_F\right)\\
&\qquad=-\frac{1}{\pi}\sum_{i=1}^{N}\int_{p\in[a_i,b_i]}\dif{p}\int_{k\in[a_i,b_i]^c}\dif{k}\frac{\sin\left(t(f(p)-f(k))\right)}{(p-k)^2}\,.}
\end{lem}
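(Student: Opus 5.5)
The plan is to compute the trace explicitly in the momentum representation, where $H=f(P)$ and $P_F=\chi_{\Sigma_{E_F}}(P)$ act by multiplication, and then remove $\ve$ by dominated convergence. By \cref{lem:Fermi intervals}, $\Sigma_{E_F}=\bigsqcup_{i}[a_i,b_i]$ is a finite union of compact intervals.

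\textbf{Kernel and trace class.} After Fourier transform, $\Lambda_\ve(X)$ has the integral kernel
\eq{K_\ve(p,k)=\frac{1}{2\pi}\,\frac{1}{\ve+\ii(p-k)}}
(up to the sign convention for the Fourier transform, which plays no role below). Consequently $\ee^{\ii tH}\Lambda_\ve(X)\ee^{-\ii tH}$ has kernel $\ee^{\ii t(f(p)-f(k))}K_\ve(p,k)$. Since $\Sigma_{E_F}$ is compact and $\int_{\RR}|K_\ve(p,k)|^2\dif{k}<\infty$ uniformly in $p$, the operator $P_F\Lambda_\ve(X)$ is Hilbert--Schmidt, and so is $\Lambda_\ve(X)P_F$. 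The same holds for the time-evolved switch, because $P_F$ commutes with $\ee^{\pm\ii tH}$. Therefore each of the two terms of
\eq{P_F\left[\ee^{\ii tH}\Lambda_\ve(X)\ee^{-\ii tH},\Lambda_\ve(X)\right]P_F}
is a product of two Hilbert--Schmidt operators, hence trace-class. Its trace equals the $L^2$ pairing of the two kernels, with the outer momentum $p$ restricted to $\Sigma_{E_F}$ and the intermediate momentum $k$ ranging over $\RR$.

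\textbf{The $\ve$-regularized formula.} Carrying out this pairing, the two terms combine using $K_\ve(p,k)K_\ve(k,p)=|K_\ve(p,k)|^2$:
\eq{2\pi\ii\tr(\cdots)=-\frac1\pi\int_{\Sigma_{E_F}}\dif{p}\int_{\RR}\dif{k}\,\frac{\sin\left(t(f(p)-f(k))\right)}{\ve^2+(p-k)^2}\,.}
I would check the overall sign carefully at this step. Next I split $p\in[a_i,b_i]$. The part of the $k$-integral with $k\in\Sigma_{E_F}$ vanishes, because the integrand is antisymmetric under $p\leftrightarrow k$ and the region $\Sigma_{E_F}\times\Sigma_{E_F}$ is symmetric; this is legitimate since the integrand is absolutely integrable there for fixed $\ve>0$. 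Adding back the pieces with $p\in[a_i,b_i]$ and $k\in[a_j,b_j]$, $j\ne i$, which also sum to zero by the same antisymmetry, gives, for fixed $\ve$, the stated expression with $(p-k)^2$ replaced by $\ve^2+(p-k)^2$.

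\textbf{The limit $\ve\to0^+$.} This is the main obstacle: the integrand must be dominated uniformly in $\ve$, near the endpoints $a_i,b_i$ where $p$ and $k$ approach each other, and at $|k|\to\infty$.
\begin{itemize}
\item For $|k|$ large, use $|\sin|\le1$; then $\int\dif{k}/(p-k)^2$ converges at infinity, uniformly for $p$ in the compact interval.
\item For $k$ near $[a_i,b_i]$, use $|\sin(t(f(p)-f(k)))|\le tC|p-k|$, with $C$ a local Lipschitz constant of $f$. This bounds the integrand by $tC/|p-k|$.
\item Near, say, $b_i$, write $u=b_i-p>0$ and $v=k-b_i>0$. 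The bound becomes $tC/(u+v)$, which is integrable on $[0,\delta]^2$; the endpoint $a_i$ is handled identically.
\end{itemize}
With this integrable majorant, independent of $\ve$, dominated convergence yields the claimed formula.
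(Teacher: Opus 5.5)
Your proposal is correct and follows essentially the same route as the paper: the momentum-space kernel $\frac{1}{2\pi}(\ve+\ii(p-k))^{-1}$, the Hilbert--Schmidt factorization of both terms, the kernel pairing (the sign does come out right, since $2\pi\ii\cdot\frac{2\ii}{4\pi^2}=-\frac{1}{\pi}$), antisymmetry to remove the occupied--occupied blocks, and dominated convergence with the corner majorant $Ct/|p-k|$. Your bookkeeping and tail bound are only slightly more explicit than the paper's. You remove all of $\Sigma_{E_F}\times\Sigma_{E_F}$ and then add back the cancelling off-diagonal blocks, and you bound the $|k|\to\infty$ tail by $(p-k)^{-2}$.
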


\begin{proof}
We use the unitary Fourier transform. In momentum space the kernel of $\Lambda_\ve(X)$ is
\eq{\widehat{\Lambda}_\ve(p,k)=\frac{1}{2\pi}\frac{1}{\ve+\ii(p-k)}\,.}
Since $\Sigma_{E_F}$ has finite measure,
\eq{\norm{P_F\Lambda_\ve(X)}_{\mathrm{HS}}^2=\norm{\Lambda_\ve(X)P_F}_{\mathrm{HS}}^2
=\frac{1}{4\pi^2}\int_{p\in\Sigma_{E_F}}\dif{p}\int_{k\in\RR}\frac{\dif{k}}{\ve^2+(p-k)^2}
=\frac{|\Sigma_{E_F}|}{4\pi\ve}<\infty\,.}
Consequently, both terms in the compressed commutator are trace-class. Indeed,
\eq{&P_F\ee^{\ii tH}\Lambda_\ve(X)\ee^{-\ii tH}\Lambda_\ve(X)P_F\\
&\qquad=\left(P_F\ee^{\ii tH}\Lambda_\ve(X)\right)\left(\ee^{-\ii tH}\Lambda_\ve(X)P_F\right),\\
&P_F\Lambda_\ve(X)\ee^{\ii tH}\Lambda_\ve(X)\ee^{-\ii tH}P_F\\
&\qquad=\left(P_F\Lambda_\ve(X)\ee^{\ii tH}\right)\left(\Lambda_\ve(X)\ee^{-\ii tH}P_F\right),}
and every factor on the right-hand sides is Hilbert--Schmidt. Computing the trace from the kernels gives
\eq{&2\pi\ii\tr\left(P_F\left[\ee^{\ii tH}\Lambda_\ve(X)\ee^{-\ii tH},\Lambda_\ve(X)\right]P_F\right)\\
&\qquad=-\frac{1}{\pi}\int_{p\in\Sigma_{E_F}}\dif{p}\int_{k\in\RR}\dif{k}\frac{\sin\left(t(f(p)-f(k))\right)}{(p-k)^2+\ve^2}\,.}

For each $i$, the integral over $p,k\in[a_i,b_i]$ vanishes because its integrand is antisymmetric under $p\leftrightarrow k$. It remains to take $\ve\to0^+$. Away from the endpoints this is immediate. Near an endpoint, $|f(p)-f(k)|\leq C|p-k|$, and hence the integrand is bounded by $Ct/|p-k|$. This is integrable across a corner of $[a_i,b_i]\times[a_i,b_i]^c$, so dominated convergence gives the result.
\end{proof}

We shall use the following elementary form of the Riemann--Lebesgue lemma.

\begin{lem}\label{lem:oscillatory integral}
Let $D\subseteq\RR^2$ be measurable, suppose that its first coordinate ranges in a compact interval and that $|p-k|\geq\eta>0$ on $D$. Then
\eq{\lim_{t\to\infty}\int_D\dif{p}\dif{k}\frac{\sin\left(t(f(p)-f(k))\right)}{(p-k)^2}=0\,.}
\end{lem}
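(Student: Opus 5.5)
The plan is to reduce the statement to the classical one-dimensional Riemann--Lebesgue lemma by pushing the absolutely integrable weight forward along the phase $\varphi(p,k):=f(p)-f(k)$. The first step is to check that
\eq{g(p,k):=\frac{\chi_D(p,k)}{(p-k)^2}}
lies in $L^1(\RR^2)$. Let $I$ be the compact interval containing the first coordinates of $D$. Since $|p-k|\geq\eta$ on $D$, we have
\eq{\int_{\RR^2}|g|\leq\int_I\dif{p}\int_{|k-p|\geq\eta}\frac{\dif{k}}{(p-k)^2}=\frac{2|I|}{\eta}<\infty\,.}
The integral in question is therefore $\int_{\RR^2}g\,\sin(t\varphi)$, and it suffices to show that the pushforward measure $\mu:=\varphi_*(g\,\dif{p}\dif{k})$ on $\RR$ is absolutely continuous. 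If so, $\mu=h(s)\dif{s}$ with $h\in L^1(\RR)$, and
\eq{\int_{\RR^2}g\,\sin(t\varphi)=\int_{\RR}h(s)\sin(ts)\dif{s}\longrightarrow0\qquad(t\to\infty)}
by the usual Riemann--Lebesgue lemma.

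Before that, I will dispose of the degenerate case. If $f$ is constant, then $\varphi\equiv0$ and the integrand vanishes identically, so there is nothing to prove. Otherwise $f$ is a nonconstant real-analytic function, so $f'$ is a real-analytic function that is not identically zero. Its zero set $Z$ is therefore discrete, and in particular countable.

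The main step is the absolute continuity of $\mu$, which I will derive from the structure of the critical set of $\varphi$. The gradient is $\nabla\varphi=(f'(p),-f'(k))$, so it vanishes exactly on $Z\times Z$, a countable and hence Lebesgue-null set. Let $A\subseteq\RR$ be Lebesgue-null. On the open set $U=\RR^2\setminus(Z\times Z)$ the map $\varphi$ is a submersion. Near each point of $U$ we can choose local $C^1$ coordinates (via the inverse function theorem, using $(\varphi,k)$ or $(\varphi,p)$ according to which partial derivative is nonzero) in which $\varphi$ is a coordinate projection. In such a chart, $\varphi^{-1}(A)$ is a product of a null set with an interval, hence null, and the change of variables is locally bi-Lipschitz, so the set remains null in the original coordinates. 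Covering $U$ by countably many such charts shows that $\varphi^{-1}(A)\cap U$ is null. Together with the nullity of $Z\times Z$, this gives $\mu(A)=\int_{\varphi^{-1}(A)}g=0$.

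Thus $\mu\ll\dif{s}$. The Radon--Nikodym derivative $h$ is integrable, with $\|h\|_{L^1}\leq\|g\|_{L^1}$, and the conclusion follows as above. I expect this pushforward step to be the only real obstacle. It is also the place where real analyticity is used: it guarantees that the critical set of $\varphi$ is null, which mere smoothness would not. An alternative I would keep in reserve is the coarea formula, which expresses $h(s)$ as $\int_{\varphi^{-1}(s)}g/|\nabla\varphi|\,\dif{\mathcal{H}^1}$ for almost every $s$.
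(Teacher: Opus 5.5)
Your proposal is correct and follows essentially the same route as the paper: the amplitude is in $L^1$, the critical set $Z\times Z$ is null because the zeros of $f'$ are isolated, $g\,\dif{p}\dif{k}$ pushes forward to an $L^1$ density, and the one-dimensional Riemann--Lebesgue lemma finishes the argument. The differences are minor: you prove absolute continuity of the pushforward with local submersion charts instead of citing the coarea formula, and you treat the constant-$f$ case explicitly, which the paper tacitly excludes.
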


\begin{proof}
The amplitude is in $L^1(D)$. Since $f$ is non-constant and real analytic, the zeros of $f'$ are isolated. Thus the gradient of $(p,k)\mapsto f(p)-f(k)$ vanishes only on a set of measure zero. The coarea formula pushes the amplitude forward to an $L^1(\RR)$ function, and the assertion follows from the Riemann--Lebesgue lemma.
\end{proof}

The contribution close to an endpoint is controlled by the following Taylor estimate.

\begin{lem}\label{lem:Taylor expansion}
For $a\in\RR$, $t\geq0$, and $x>0$ sufficiently small,
\eq{&\int_{y\in[-x,x]}\dif{y}\sin\left(t\left(f\left(a+\frac{y+x}{2}\right)-f\left(a+\frac{y-x}{2}\right)\right)\right)\\
&\qquad=2x\sin(tf'(a)x)+R(t,x),}
where
\eq{|R(t,x)|\leq\frac{2}{3}t\norm{f''}_{L^\infty([a-x,a+x])}x^3\,.}
\end{lem}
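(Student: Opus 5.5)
Write $p_\pm:=a+\frac{y\pm x}{2}$, so that $p_+-p_-=x$ and $\frac{p_++p_-}{2}=a+\frac y2$. The plan is to Taylor expand the phase difference about the midpoint and then use that $\sin$ is $1$-Lipschitz.

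\emph{Step 1: expand the phase.} Taylor's theorem with Lagrange remainder at the midpoint $m=a+\frac y2$ gives
\begin{align*}
f(p_\pm)=f(m)\pm f'(m)\tfrac{x}{2}+\tfrac12 f''(\xi_\pm)\tfrac{x^2}{4}.
\end{align*}
Hence
\begin{align*}
f(p_+)-f(p_-)=f'(m)\,x+\tfrac{x^2}{8}\bigl(f''(\xi_+)-f''(\xi_-)\bigr).
\end{align*}
For $|y|\le x$ all points involved lie in $[a-x,a+x]$. A second expansion gives
\begin{align*}
f'(m)=f'(a)+\tfrac y2 f''(\zeta),\qquad |f'(m)-f'(a)|\le \tfrac{|y|}{2}\norm{f''}_{L^\infty([a-x,a+x])}.
\end{align*}
Writing $M:=\norm{f''}_{L^\infty([a-x,a+x])}$, we obtain
\begin{align*}
\bigl|f(p_+)-f(p_-)-f'(a)x\bigr|\le \tfrac{|y|x}{2}M+\tfrac{x^2}{4}M.
\end{align*}
Alternatively, and more sharply, one can write $f(p_+)-f(p_-)=\int_{p_-}^{p_+}f'(s)\,\dif s$ and compare $f'(s)$ with $f'(a)$, giving
\begin{align*}
\Bigl|\int_{p_-}^{p_+}\bigl(f'(s)-f'(a)\bigr)\dif s\Bigr|\le M\int_{p_-}^{p_+}|s-a|\,\dif s.
\end{align*}

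\emph{Step 2: integrate.} Since $|\sin u-\sin v|\le|u-v|$, the remainder $R(t,x)$ is bounded by $t$ times the integral over $y\in[-x,x]$ of the phase error. The main term integrates to exactly $2x\sin(tf'(a)x)$.

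\emph{Step 3: fix the constant.} This is the only real obstacle. The sharper integral form gives
\begin{align*}
\int_{-x}^{x}\dif y\int_{p_-}^{p_+}|s-a|\,\dif s.
\end{align*}
The inner integral is $\frac{x^2+y^2}{4}$ when $|y|\le x$, because $p_\pm-a=\frac{y\pm x}{2}$ straddles $0$. The outer integral is then
\begin{align*}
\int_{-x}^{x}\frac{x^2+y^2}{4}\,\dif y=\frac14\Bigl(2x^3+\frac{2x^3}{3}\Bigr)=\frac{2}{3}x^3.
\end{align*}
This yields $|R(t,x)|\le\frac23 tMx^3$, exactly the stated bound.

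"Sufficiently small $x$" is only needed so that $[a-x,a+x]$ stays in the region where the $L^\infty$ norm is considered, for instance near an endpoint $a=a_i$ or $b_i$. If the crude Lagrange route gave a larger constant, I would switch to this integral form, which is the one I expect to work.
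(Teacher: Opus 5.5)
Your argument is correct and, once you pass to the integral form, it is exactly the paper's proof. You use that $\sin$ is $1$-Lipschitz, write the phase error as $\int_{p_-}^{p_+}(f'(s)-f'(a))\,\dif s$, bound it by $\norm{f''}_{L^\infty([a-x,a+x])}\frac{x^2+y^2}{4}$, and integrate in $y$ to get $\frac{2}{3}x^3$; the midpoint Lagrange route of Step 1 would only give the constant $1$ instead of $\frac{2}{3}$, so you were right to discard it.
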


\begin{proof}
Since the sine is Lipschitz-$1$,
\eq{|R(t,x)|\leq t\int_{-x}^{x}\dif{y}\left|f\left(a+\frac{y+x}{2}\right)-f\left(a+\frac{y-x}{2}\right)-f'(a)x\right|\,.}
The expression inside the absolute value equals
\eq{\int_{-x/2}^{x/2}\left(f'\left(a+\frac{y}{2}+s\right)-f'(a)\right)\dif{s}\,.}
Its absolute value is at most
\eq{\norm{f''}_{L^\infty([a-x,a+x])}\int_{-x/2}^{x/2}\left|\frac{y}{2}+s\right|\dif{s}=\frac{1}{4}\norm{f''}_{L^\infty([a-x,a+x])}(x^2+y^2)\,.}
Integration in $y$ proves the claim.
\end{proof}

\begin{lem}\label{lem:endpoint contributions}
For each component $[a_i,b_i]$ of $\Sigma_{E_F}$,
\eq{&-\frac{1}{\pi}\lim_{t\to\infty}\int_{p\in[a_i,b_i]}\dif{p}\int_{k\leq a_i}\dif{k}\frac{\sin\left(t(f(p)-f(k))\right)}{(p-k)^2}=-\frac{1}{2}\sgn(f'(a_i)),\\
&-\frac{1}{\pi}\lim_{t\to\infty}\int_{p\in[a_i,b_i]}\dif{p}\int_{k\geq b_i}\dif{k}\frac{\sin\left(t(f(p)-f(k))\right)}{(p-k)^2}=\frac{1}{2}\sgn(f'(b_i))\,.}
\end{lem}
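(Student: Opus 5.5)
The plan is to localize at the corner $(p,k)=(a_i,a_i)$, resp. $(b_i,b_i)$, of the integration domain. There, in rotated coordinates, the integral reduces to a Dirichlet-type integral $\int\sin(\lambda x)/x\,\dif{x}\to\frac{\pi}{2}\sgn\lambda$. I treat the left endpoint $a=a_i$; the right endpoint is identical, with the orientation of $p-k$ reversed, which produces the opposite overall sign.

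\emph{Splitting.} Since $f'(a)\neq0$ by \cref{lem:Fermi intervals}, fix $\eta_0>0$ so small that $|f'|\geq c>0$ on $[a-\eta_0,a+\eta_0]$, that $f'$ has constant sign there, and that $\eta_0<b_i-a$. Introduce a $t$-dependent scale $\eta(t)=t^{-2/3}$, chosen so that $t\eta(t)\to\infty$ and $t\eta(t)^2\to0$. Split the domain $D=[a,b_i]\times(-\infty,a]$ into three pieces:
\begin{itemize}
\item the far region $\{p-k\geq\eta_0\}$;
\item the intermediate region $\{\eta(t)\leq p-k<\eta_0\}$;
\item the near region $\{p-k<\eta(t)\}$.
\end{itemize}
The far region tends to $0$ by \cref{lem:oscillatory integral}: the first coordinate $p$ ranges in the compact interval $[a,b_i]$, and $|p-k|\geq\eta_0$ there.

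\emph{Near and intermediate regions.} Use the coordinates $x=p-k>0$ and $y=p+k-2a$, so that $p=a+\frac{y+x}{2}$ and $k=a+\frac{y-x}{2}$. The constraints $p\geq a\geq k$ become $|y|\leq x$, and $\dif{p}\dif{k}=\frac12\dif{x}\dif{y}$.

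In the near region, \cref{lem:Taylor expansion} gives
\eq{\frac12\int_0^{\eta(t)}\frac{\dif{x}}{x^2}\Bigl(2x\sin(tf'(a)x)+R(t,x)\Bigr)=\int_0^{tf'(a)\eta(t)}\frac{\sin u}{u}\dif{u}+\calO\left(t\eta(t)^2\right).}
Since $t\eta(t)\to\infty$ and $t\eta(t)^2\to0$, this tends to $\frac{\pi}{2}\sgn f'(a)$. Multiplying by $-\frac1\pi$ gives the claimed value $-\frac12\sgn(f'(a))$.

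In the intermediate region the phase $\varphi=f(p)-f(k)$ satisfies
\eq{\partial_x\varphi=\tfrac12\left(f'(p)+f'(k)\right),}
so $|\partial_x\varphi|\geq c$, because $f'$ has a fixed sign on the $\eta_0$-neighbourhood. For fixed $y$, I integrate by parts in $x$ over $[\max(|y|,\eta(t)),\eta_0]$, writing
\eq{\sin(t\varphi)=-\partial_x\cos(t\varphi)\big/\left(t\,\partial_x\varphi\right).}
The resulting terms are bounded as follows.
\begin{itemize}
\item The boundary terms are $\calO\left(1/(t\max(|y|,\eta)^2)\right)$.
\item Differentiating $1/(x^2\partial_x\varphi)$ gives $\calO\left(1/(tx^3)\right)$, using boundedness of $f''$ near $a$.
\end{itemize}
Integrating over $|y|\leq\eta_0$ gives a total of $\calO\left(1/(t\eta(t))\right)+\calO\left(1/t\right)\to0$.

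\emph{Main obstacle.} The hard step is reconciling the two regimes. The Taylor remainder in \cref{lem:Taylor expansion} grows like $t x^3$, so the Dirichlet approximation is usable only on a shrinking corner $x\lesssim t^{-1/2}$. The Riemann--Lebesgue argument, by contrast, needs a fixed separation from the diagonal. The intermediate integration by parts, which exploits the non-vanishing Fermi velocity at the endpoint, is what bridges them. I expect the bookkeeping of the boundary terms at $x=|y|$ (where $k$ hits $a$) and at $x=\eta(t)$ to need the most care.
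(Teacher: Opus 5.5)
Your proposal is correct and follows essentially the same route as the paper. Both use the rotated coordinates $x=p-k$, $y=p+k-2a_i$, apply \cref{lem:Taylor expansion} on a shrinking corner, integrate by parts in $x$ (including the moving boundary $x=|y|$) on the intermediate region, and apply \cref{lem:oscillatory integral} on $x\geq\eta_0$. The only difference is your scale $\eta(t)=t^{-2/3}$ versus the paper's $\delta_t=t^{-3/4}$; both satisfy $t\eta\to\infty$ and $t\eta^2\to0$, and the paper itself switches to $t^{-2/3}$ in \cref{prop:finite-time response}.
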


\begin{proof}
We prove the first identity. Set $x=p-k$ and $y=p+k-2a_i$. The Jacobian is $1/2$, and the part of the transformed domain with $0<x<b_i-a_i$ has $-x\leq y\leq x$. Choose $\delta_t=t^{-3/4}$. By \cref{lem:Taylor expansion}, the contribution from $0<x<\delta_t$ is
\eq{-\frac{1}{2\pi}\int_0^{\delta_t}\frac{2x\sin(tf'(a_i)x)+R(t,x)}{x^2}\dif{x}\,.}
The error is $\calO(t\delta_t^2)=o(1)$, whereas
\eq{-\frac{1}{\pi}\lim_{t\to\infty}\int_0^{\delta_t}\frac{\sin(tf'(a_i)x)}{x}\dif{x}=-\frac{1}{2}\sgn(f'(a_i))\,.}

It remains to show that the complement gives zero. Choose $0<\eta<b_i-a_i$ so small that $f'$ has constant sign and is bounded away from zero on $[a_i-\eta,a_i+\eta]$. On $\delta_t<x<\eta$, put
\eq{g(x,y)=f\left(a_i+\frac{y+x}{2}\right)-f\left(a_i+\frac{y-x}{2}\right)\,.}
Then $|\partial_xg|$ is bounded away from zero. Using $\ee^{\ii tg}=(\ii t\partial_xg)^{-1}\partial_x\ee^{\ii tg}$ and integrating by parts in $x$, including the two moving boundary terms $y=\pm x$, gives
\eq{\left|\int_{\delta_t}^{\eta}\frac{\dif{x}}{x^2}\int_{-x}^{x}\ee^{\ii tg(x,y)}\dif{y}\right|\leq\frac{C}{t\delta_t}=o(1)\,.}
Here all derivatives of $g$ which occur are uniformly bounded on the fixed compact domain. On the remaining domain $x\geq\eta$, \cref{lem:oscillatory integral} applies. This proves the first identity. The second follows in the same way with $x=p-k<0$; its main term is
\eq{\frac{1}{\pi}\lim_{t\to\infty}\int_{-\delta_t}^{0}\frac{\sin(tf'(b_i)x)}{x}\dif{x}=\frac{1}{2}\sgn(f'(b_i))\,.}
\end{proof}

\begin{proof}[Proof of \cref{thm:main theorem}]
If $\Sigma_{E_F}$ is empty, then $P_F=0$ and the assertion is immediate. Otherwise, combine \cref{lem:trace formula,lem:endpoint contributions}. By \cref{lem:Fermi intervals},
\eq{\frac{1}{2}\sum_{i=1}^{N}\left(\sgn(f'(b_i))-\sgn(f'(a_i))\right)=\sum_{i=1}^{N}1=N\,.}
This is the Euler characteristic of $\Sigma_{E_F}$.
\end{proof}
\begin{prop}[Finite-time response]\label{prop:finite-time response}
Under the hypotheses of \cref{thm:main theorem}, set
$N:=\chi(\Sigma_{E_F})$ and define
\eql{\calR(t):=
\lim_{\ve\to0^+}
2\pi\ii\tr\left(
P_F\left[
\ee^{\ii tH}\Lambda_\ve(X)\ee^{-\ii tH},
\Lambda_\ve(X)
\right]P_F
\right).
\label{eq:finite-time response}}
Then $\calR:[0,\infty)\to\RR$ is continuous,
\eq{\calR(0)=0,\qquad \lim_{t\to\infty}\calR(t)=N,}
and
\eq{\calR(t)
=
-\frac{1}{\pi}
\int_{p\in\Sigma_{E_F}}\dif{p}
\int_{k\notin\Sigma_{E_F}}\dif{k}\,
\frac{\sin\left(t(f(p)-f(k))\right)}{(p-k)^2}\,.}
In particular, for every $t\geq0$,
\eq{\left|\calR(t)-N\right|
\geq
N-\frac{1}{\pi}
\int_{p\in\Sigma_{E_F}}\dif{p}
\int_{k\notin\Sigma_{E_F}}\dif{k}\,
\frac{
\min\left\{1,t|f(p)-f(k)|\right\}
}{(p-k)^2}\,.}
The integral on the right tends to zero as $t\to0^+$.

Suppose, in addition, that
\eq{\inf_{k\notin\Sigma_{E_F}}|f'(k)|>0,
\qquad
\sup_{k\notin\Sigma_{E_F}}
\frac{|f''(k)|}{|f'(k)|^2}<\infty\,.}
Then there exists a constant $C<\infty$, depending only on $f$ and $E_F$,
such that
\eq{\left|\calR(t)-N\right|\leq Ct^{-1/3},
\qquad t\geq1\,.}
\end{prop}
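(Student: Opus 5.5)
The formula for $\calR(t)$ comes directly from the proof of \cref{lem:trace formula}. Before passing to the limit, that proof gives the $\ve$-regularized trace as an integral over $p\in\Sigma_{E_F}$, $k\in\RR$. The part with $k\in\Sigma_{E_F}$ vanishes by antisymmetry under $p\leftrightarrow k$ when both $p,k$ lie in the same component $[a_i,b_i]$. For $p,k$ in different components it also vanishes: the pair $(p,k)$ and $(k,p)$ both lie in $\Sigma_{E_F}\times\Sigma_{E_F}$, so the whole integral over $\Sigma_{E_F}\times\Sigma_{E_F}$ cancels. What remains is the integral over $\Sigma_{E_F}\times\Sigma_{E_F}^c$, and the same dominated-convergence argument near corners gives the stated $\ve\to0^+$ limit.

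Continuity and $\calR(0)=0$ follow from the bound $|\sin(t(f(p)-f(k)))|\le\min\{1,t|f(p)-f(k)|\}$. This bound is also the domination needed for continuity in $t$ on compact $t$-intervals, since $\min\{1,T|f(p)-f(k)|\}/(p-k)^2$ is integrable on $\Sigma_{E_F}\times\Sigma_{E_F}^c$ for the following reasons.
\begin{itemize}
\item[] Near the diagonal at an endpoint, the local Lipschitz bound gives $CT/|p-k|$, which is integrable over the corner.
\item[] Away from it, $1/(p-k)^2$ is integrable since $p$ ranges over a compact set.
\end{itemize}
The value $\lim_{t\to\infty}\calR(t)=N$ is \cref{thm:main theorem}. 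The lower bound $|\calR(t)-N|\ge N-|\calR(t)|$ then follows from the triangle inequality and the same majorant. As $t\to0^+$, the majorant integral tends to zero by dominated convergence, using the $t\le1$ majorant and the pointwise convergence $\min\{1,t|f(p)-f(k)|\}\to0$.

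For the rate, I would redo \cref{lem:endpoint contributions} quantitatively, endpoint by endpoint, but now organize the splitting by the $k$-variable, which ranges over $\Sigma_{E_F}^c$ where the new hypotheses hold.

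\textbf{Step 1 (near zone).} For $|p-k|<\delta$ at an endpoint $a$, \cref{lem:Taylor expansion} gives the main term $-\frac1\pi\int_0^{\delta}\sin(tf'(a)x)\,x^{-1}\dif x$. This equals $-\frac12\sgn f'(a)+\calO((t\delta)^{-1})$ by the standard tail bound for the sine integral. The remainder is $\calO(t\delta^2)$.

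\textbf{Step 2 (far zone).} For $|p-k|\ge\delta$, fix $p$ and integrate by parts in $k$ using $\ee^{-\ii tf(k)}=(-\ii tf'(k))^{-1}\partial_k\ee^{-\ii tf(k)}$. This is legitimate on all of $\Sigma_{E_F}^c$ because $|f'|\ge c>0$ there. The derivative falling on $1/(f'(k)(p-k)^2)$ produces two terms. The first is $f''/f'^2\cdot(p-k)^{-2}$, which is bounded by the second hypothesis times $(p-k)^{-2}$. The second is $f'^{-1}(p-k)^{-3}$. The boundary terms at $|p-k|=\delta$ and at the Fermi points are $\calO(1/(t\delta^2))$. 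After integrating in $p$, the total is $\calO((t\delta)^{-1})$ up to the boundary piece. Here the boundary at $k=a_i$ with $p$ near $a_i$ must be excluded, which it is by the $\delta$-cutoff.

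\textbf{Step 3 (balance).} Balancing the errors $t\delta^2$ and $1/(t\delta^2)$ crudely would give $\delta=t^{-1/2}$. The more careful bookkeeping of the boundary terms, which scale like $\delta^{-1}$ after the $p$-integration, instead gives errors $t\delta^2+(t\delta)^{-1}$. Optimizing yields $\delta=t^{-2/3}$ and an error of order $t^{-1/3}$, which is the claimed rate.

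I expect the main obstacle to be Step 2. It requires careful tracking of the boundary terms of the integration by parts uniformly in $p$, and it requires controlling the tail $|k|\to\infty$, where only the ratio hypothesis (not boundedness of $f''$) is available. I would first try to write the $k$-integration-by-parts explicitly and verify that each resulting term is integrable over the unbounded region.
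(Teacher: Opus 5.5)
Your proposal is correct. For the integral formula, continuity, $\calR(0)=0$ and the lower bound, it coincides with the paper's argument: the integral over $\Sigma_{E_F}\times\Sigma_{E_F}$ is removed by antisymmetry, and dominated convergence is applied with the majorant $\min\{1,T|f(p)-f(k)|\}/(p-k)^2$.

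For the rate you use a slightly different decomposition. The paper keeps the three-zone structure of \cref{lem:endpoint contributions}:
\begin{itemize}
\item the Taylor zone $|p-k|<\delta_t$;
\item an intermediate zone $\delta_t<|p-k|<\eta$, handled by integrating by parts in the rotated variable $x=p-k$; this uses only that $f'$ has a fixed sign near the Fermi point and costs $C/(t\delta_t)$;
\item the complement of fixed endpoint neighbourhoods, where the new hypotheses allow one integration by parts in $k$ at cost $\calO(t^{-1})$.
\end{itemize}
You instead integrate by parts in $k$ on the whole region $\{|p-k|\geq\delta\}$, so the intermediate zone is absorbed into the global step.

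This is legitimate and a bit more economical. Because $k$ ranges over $\Sigma_{E_F}^c$, the hypotheses give
$|\partial_k(f'(k)^{-1}(p-k)^{-2})|\leq C(|p-k|^{-2}+|p-k|^{-3})$, and the boundary term at $|k|\to\infty$ vanishes since $|f'|\geq c$. This already disposes of the tail you flagged as the main obstacle. The remaining bookkeeping consists of three facts:
\begin{itemize}
\item $\int_{\Sigma_{E_F}}\max\{\delta,\operatorname{dist}(p,\Sigma_{E_F}^c)\}^{-2}\dif{p}=\calO(\delta^{-1})$;
\item the boundary terms at $k=p\pm\delta$, of size $(t\delta^2)^{-1}$, occur only for $p$ within $\delta$ of a Fermi point, a set of measure $\calO(\delta)$;
\item the boundary terms at Fermi points $k$ with $|p-k|\geq\delta$ are bounded by $(ct)^{-1}(p-k)^{-2}$.
\end{itemize}
Each of these yields $\calO((t\delta)^{-1})$. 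Note that the ``crude'' balance mentioned in your Step~3 would give no decay at all, so it is the refined count that matters.

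The paper's split buys a clean separation of roles. The endpoint error $t\delta^2+(t\delta)^{-1}$ comes from purely local information, exactly as in the qualitative lemma, and the global hypotheses enter only through an $\calO(t^{-1})$ term far from the Fermi points. In your version the global hypotheses are used right up to the edge of the Taylor zone. This is harmless, since near the Fermi points they hold automatically.
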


\begin{proof}
The integral representation follows from \cref{lem:trace formula}; the
contribution from
$\Sigma_{E_F}\times\Sigma_{E_F}$ vanishes by antisymmetry. The inequality
\eq{|\sin s|\leq\min\{1,|s|\}}
gives
\eq{|\calR(t)|
\leq
\frac{1}{\pi}
\int_{p\in\Sigma_{E_F}}\dif{p}
\int_{k\notin\Sigma_{E_F}}\dif{k}\,
\frac{
\min\left\{1,t|f(p)-f(k)|\right\}
}{(p-k)^2}\,,}
and the stated lower bound follows from
$|\calR(t)-N|\geq N-|\calR(t)|$.

Near a Fermi point, $|f(p)-f(k)|\leq C|p-k|$, so the integrand is bounded
by $C/|p-k|$, which is integrable across the corresponding
two-dimensional corner. Away from the Fermi points, $(p-k)^{-2}$ is
integrable on
$\Sigma_{E_F}\times\Sigma_{E_F}^c$. Dominated convergence therefore proves
continuity and $\calR(0)=0$.

For the quantitative large-time estimate, repeat the proof of
\cref{lem:endpoint contributions} with
\eq{\delta_t=t^{-2/3}.}
At each Fermi point, the Taylor remainder from
\cref{lem:Taylor expansion} contributes at most
$Ct\delta_t^2$, while the tail of the Dirichlet integral and the
intermediate region contribute at most $C/(t\delta_t)$. Hence the total
endpoint error is bounded by
\eq{C\left(t\delta_t^2+\frac{1}{t\delta_t}\right)
\leq Ct^{-1/3}\,.}

On the complement of the endpoint neighborhoods, the additional
hypotheses permit one integration by parts in $k$, using
\eq{\ee^{-\ii tf(k)}
=
-\frac{1}{\ii t f'(k)}
\partial_k\ee^{-\ii tf(k)}\,.}
The resulting amplitude and its derivative are integrable, and this part
is therefore $\calO(t^{-1})$. Summing over the finitely many Fermi points
proves the result.
\end{proof}

\begin{rem}[Necessity of the large-time limit]
The limit $t\to\infty$ in \cref{thm:main theorem} is essential rather than
a technical convenience. If $N>0$, then \cref{prop:finite-time response}
shows that the response evolves continuously from $\calR(0)=0$ to its
quantized value $N$. It must therefore assume noninteger values at finite
times, although accidental integer values are not excluded. Only
the large-time limit eliminates the non-topological transient
contributions.

Under the hypotheses of \cref{thm:main theorem} alone there is no
model-independent algebraic convergence rate. Real-analytic dispersions
may have stationary points of arbitrarily high finite order away from the
Fermi points, leading to arbitrarily slow stationary-phase decay. The
additional assumptions in \cref{prop:finite-time response} exclude these
stationary contributions and yield the displayed $t^{-1/3}$ bound.
\end{rem}

\begin{example}[Quadratic dispersion]\label{ex:finite-time quadratic}
Consider
\eq{f(p)=p^2,\qquad E_F=1\,.}
Then
\eq{\Sigma_{E_F}=[-1,1],\qquad \chi(\Sigma_{E_F})=1\,.}
The finite-time response from \cref{eq:finite-time response} is
\eq{\calR(t)
=
-\frac{1}{\pi}
\int_{-1}^{1}\dif{p}
\int_{\RR\setminus[-1,1]}\dif{k}\,
\frac{\sin\left(t(p^2-k^2)\right)}{(p-k)^2}\,.}
Using the symmetry between the two components of
$\RR\setminus[-1,1]$, followed by the change of variables
$x=p-k$, $y=p+k$, gives
\eq{\calR(t)
=
\frac{2}{\pi t}
\int_0^\infty\dif{x}\,
\frac{\sin(tx^2)\sin(2tx)}{x^3}
=
\frac{2}{\pi}
\int_0^\infty\dif{u}\,
\frac{\sin(u^2)\sin(2\sqrt{t}\,u)}{u^3}\,.}

This formula gives a quantitative failure of finite-time quantization. In
fact,
\eq{\left|
\calR(t)-2\sqrt{\frac{2t}{\pi}}
\right|
\leq
\frac{4}{3\sqrt{\pi}}t^{3/2}\,.}
To see this, set
\eq{J(s):=\int_0^\infty\dif{u}\,
\frac{\sin(u^2)\sin(su)}{u^3}\,.}
Differentiating after inserting an Abel factor and then removing it gives
\eq{J(0)=J''(0)=0,\qquad
J'(0)=\int_0^\infty\dif{u}\,\frac{\sin(u^2)}{u^2}
=\sqrt{\frac{\pi}{2}}}
and
\eq{J'''(s)
=
-\int_0^\infty\dif{u}\,\sin(u^2)\cos(su)
=
-\frac{\sqrt{\pi}}{2}
\sin\left(\frac{\pi}{4}-\frac{s^2}{4}\right).}
Taylor's theorem therefore yields
\eq{\left|J(s)-s\sqrt{\frac{\pi}{2}}\right|
\leq\frac{\sqrt{\pi}}{12}s^3\,.}
Substituting $s=2\sqrt{t}$ and using
$\calR(t)=2J(2\sqrt{t})/\pi$ proves the displayed estimate.

For example, at $t=10^{-2}$,
\eq{0.1588<\calR(10^{-2})<0.1604\,,}
which is manifestly not quantized, whereas
\eq{\lim_{t\to\infty}\calR(t)=1}
by \cref{thm:main theorem}. Thus even for the elementary free-particle
dispersion, the integer arises only after the large-time limit.
\end{example}

\section{The discrete one-dimensional theory}\label{sec:discrete}

We now consider the corresponding problem on the lattice. Let
\eq{\calH=\ell^2(\ZZ),\qquad (X\psi)(n)=n\psi(n),}
and let $H$ be a translation-invariant Hamiltonian. Under the Fourier
transform
\eq{(\calF\psi)(p)=\sum_{n\in\ZZ}\psi(n)\ee^{-\ii np},
\qquad
\calF:\ell^2(\ZZ)\longrightarrow
L^2\left(\TT,\frac{\dif{p}}{2\pi}\right),
\qquad
\TT=\RR/(2\pi\ZZ),}
the Hamiltonian is multiplication by a real-valued function
$f:\TT\to\RR$:
\eq{(\calF H\calF^*\psi)(p)=f(p)\psi(p)\,.}
The spatial switch is the projection
\eq{(\Lambda(X)\psi)(n)=\chi_{\NN_0}(n)\psi(n)\,.}
For $\ve>0$ we introduce its Abel regularization
\eq{(\Lambda_\ve(X)\psi)(n)
=\chi_{\NN_0}(n)\ee^{-\ve n}\psi(n)\,.}
Notice that $\Lambda_\ve(X)$ is trace-class and that
$\Lambda_\ve(X)\to\Lambda(X)$ strongly as $\ve\to0^+$.

For a Fermi energy $E_F$, set
\eq{\Sigma_{E_F}:=\Set{p\in\TT\mid f(p)\leq E_F}}
and let $P_F=\chi_{(-\infty,E_F)}(H)$. Since a regular level set has measure
zero, $\calF P_F\calF^*$ agrees almost everywhere with multiplication by
$\chi_{\Sigma_{E_F}}$.

\begin{thm}[The lattice response]\label{thm:lattice response}
Let $f:\TT\to\RR$ be real analytic and suppose that $E_F$ is a regular value
of $f$. Then, for every fixed $t$, the operator
\eql{P_F\left[
\ee^{\ii tH}\Lambda(X)\ee^{-\ii tH},\Lambda(X)
\right]P_F}
is trace-class. Moreover,
\eql{&\lim_{t\to\infty}
2\pi\ii\tr\left(
P_F\left[
\ee^{\ii tH}\Lambda(X)\ee^{-\ii tH},\Lambda(X)
\right]P_F
\right)\notag\\
&\qquad=
\lim_{t\to\infty}\lim_{\ve\to0^+}
2\pi\ii\tr\left(
P_F\left[
\ee^{\ii tH}\Lambda_\ve(X)\ee^{-\ii tH},
\Lambda_\ve(X)
\right]P_F
\right)
=\chi(\Sigma_{E_F})\,.
\label{eq:lattice response}}
\end{thm}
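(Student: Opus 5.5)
I would prove \cref{thm:lattice response} by following the same route as in the continuum: compute the momentum-space kernel, reduce to an oscillatory double integral, and read off the endpoint contributions from \cref{lem:endpoint contributions}. The lattice differs in two places. First, the sharp switch already yields a trace-class operator. Second, the kernel of $\Lambda(X)$ on $\TT$ is not $1/(p-k)$ but its periodic analog.

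\emph{Trace class and kernel.} Under $\calF$, $\Lambda_\ve(X)$ has kernel $\frac{1}{2\pi}\sum_{n\geq0}\ee^{-\ve n}\ee^{-\ii n(p-k)}=\frac{1}{2\pi}\bigl(1-\ee^{-\ve-\ii(p-k)}\bigr)^{-1}$ with respect to $\dif{k}$.

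For trace class, I write
\eq{[\ee^{\ii tH}\Lambda(X)\ee^{-\ii tH},\Lambda(X)]=\ee^{\ii tH}[\Lambda(X),\ee^{-\ii tH}]\Lambda(X)+\Lambda(X)[\ee^{\ii tH},\Lambda(X)]\ee^{-\ii tH},}
so it suffices that $[\Lambda(X),\ee^{-\ii tH}]\in\calJ_1$. This commutator is a Hankel-type operator. In position space its matrix elements are $\hat u(n-m)$ for $n\geq0>m$ (and the mirror case), where $u=\ee^{-\ii tf}$. Since $f$ is real analytic on $\TT$, $u$ is real analytic and its Fourier coefficients decay exponentially. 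The trace norm is then bounded by $\sum_{n\geq0,m<0}|\hat u(n-m)|\leq\sum_{j\geq1}j|\hat u(j)|<\infty$ (bounding by the $\ell^1$ norm of matrix entries). Compressing by the bounded $P_F$ preserves trace class. Note that no compactness of $\Sigma_{E_F}$ is needed here, which is the contrast with \cref{rem:continuum trace class}.

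\emph{$\ve\to0^+$ and the trace formula.} Since $\Lambda_\ve(X)\to\Lambda(X)$ in trace norm after sandwiching (indeed $\Lambda(X)-\Lambda_\ve(X)$ commuted with $\ee^{-\ii tH}$ converges in $\calJ_1$ by the same coefficient estimate weighted by $1-\ee^{-\ve n}$), the two limits in \cref{eq:lattice response} agree for each fixed $t$. Computing the trace of the $\ve$-regularized expression as in \cref{lem:trace formula} gives
\eq{-\frac{1}{2\pi}\cdot\frac{1}{\pi}\cdot\pi\int_{p\in\Sigma_{E_F}}\dif{p}\int_{k\in\TT}\dif{k}\,\frac{\sin\bigl(t(f(p)-f(k))\bigr)}{|1-\ee^{-\ve-\ii(p-k)}|^2}.}
I will fix the normalization carefully; it must reduce to $-\frac{1}{\pi}\int\!\!\int\frac{\sin(\cdot)}{4\sin^2((p-k)/2)}$. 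Antisymmetry kills $\Sigma\times\Sigma$. In the limit, the kernel $1/(4\sin^2((p-k)/2))$ equals $1/(p-k)^2$ plus a bounded smooth function near the diagonal.

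\emph{Large $t$.} Using \cref{lem:Fermi intervals} on $\TT$, the set $\Sigma_{E_F}$ is either all of $\TT$ or a finite union of arcs $[a_i,b_i]$. If $\Sigma_{E_F}=\TT$, the compressed integral is over $\Sigma\times\Sigma$ only, hence it vanishes, matching $\chi(\TT)=0$. This is the case of a filled band. If $\Sigma_{E_F}=\varnothing$, everything is trivially zero. Otherwise, I split the double integral into neighborhoods of the corners $(a_i,a_i)$ and $(b_i,b_i)$ plus the rest. Near the corners I replace the kernel by $(p-k)^2$; the bounded difference times $\sin$ tends to zero by Riemann–Lebesgue via the coarea argument of \cref{lem:oscillatory integral}. \Cref{lem:endpoint contributions} then gives $\mp\frac12\sgn f'$. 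The remaining region, where $p-k$ is bounded away from $0$ mod $2\pi$, tends to zero by \cref{lem:oscillatory integral}, which applies verbatim on $\TT^2$ if $f$ is nonconstant. (If $f$ is constant, $E_F$ being a regular value forces $\Sigma_{E_F}\in\{\varnothing,\TT\}$.) Summing the corner terms gives $N=\chi(\Sigma_{E_F})$.

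The main obstacle is the justification of the $\ve$-limit together with the trace-class claim for the sharp switch, i.e.\ establishing $\calJ_1$ convergence rather than mere kernel convergence. I expect the exponential decay of $\hat u$ to handle it cleanly.
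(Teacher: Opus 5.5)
Your architecture (the Hankel-type $\ell^1$ bound, the explicit Abel kernel, antisymmetry on $\Sigma_{E_F}\times\Sigma_{E_F}$, dominated convergence, and reducing the corners to \cref{lem:endpoint contributions}) is the paper's. However, the step that identifies the sharp trace with the $\ve\to0^+$ limit of the regularized one fails as written.

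You claim $[\Lambda(X)-\Lambda_\ve(X),\ee^{-\ii tH}]\to0$ in $\calJ_1$ ``by the same coefficient estimate weighted by $1-\ee^{-\ve n}$.'' But $\beta_\ve(n):=\chi_{\NN_0}(n)(1-\ee^{-\ve n})$ is not constant on $\NN_0$. Its commutator with the translation $T_r$ by $r$ sites therefore has nonzero entries $\ee^{-\ve m}-\ee^{-\ve n}$ at \emph{all} pairs $n,m\geq0$ with $n-m=r$, not only on the corner $n\geq0>m$. Since $\beta_\ve$ is monotone with limit $1$, telescoping gives $\norm{[\beta_\ve(X),T_r]}_1=|r|$ exactly, for every $\ve>0$. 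Your estimate therefore yields only the uniform bound $\sum_r|r|\,|\hat u(r)|$, not convergence.

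The failure is real for $\ee^{-\ii tH}$ itself. Up to $o(1)$ in $\calJ_1$, $[\beta_\ve(X),\ee^{-\ii tH}]=G_\ve M_v$, where $G_\ve=\ve\ee^{-\ve X}\chi_{\NN_0}(X)$ and $M_v$ is the Fourier multiplier by $v=\ii u'$, so $|v|=t|f'|$. Pairing with $M_v^*$ gives $\norm{G_\ve M_v}_1\geq\norm{G_\ve}_1\,t\int_\TT f'(p)^2\dif{p}/(2\pi\norm{f'}_\infty)$, with $\norm{G_\ve}_1\to1$. Since you obtain the sharp trace only as this limit, the sharp half of the theorem, including its filled-band case, rests on an unsupported step.

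The paper never needs trace-norm convergence. Your Hankel bound gives $D_t=\ee^{\ii tH}\Lambda(X)\ee^{-\ii tH}-\Lambda(X)\in\mathcal{S}_1$, and cyclicity yields $\tr(P_F[D_t,\Lambda(X)]P_F)=\tr(D_t[\Lambda(X),P_F])$. Now $\Lambda(X)$ appears only in a bounded factor, so it may be replaced by $\Lambda_\ve(X)$ using merely strong convergence against the trace-class $D_t$. The ordinary kernel $(\ee^{\ii t(f(p)-f(k))}-1)/(1-\ee^{-\ii(p-k)})$ of $D_t$ then gives the sharp formula with $1/|\ee^{\ii p}-\ee^{\ii k}|^2$ directly. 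Dominated convergence shows the regularized integrals tend to the same expression, and the sharp filled-band case is $\tr[D_t,\Lambda(X)]=0$ by cyclicity.

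Alternatively, your trace-norm claim for the full compressed commutator can be rescued. The bulk term $G_\ve M_v$ is annihilated by the outer commutator with $\Lambda_\ve(X)$, because $\norm{G_\ve}=\calO(\ve)$ and $\norm{G_\ve}_1=\calO(1)$. Moreover, the relevant commutators with $\Lambda(X)$ are trace class, and those with $\beta_\ve(X)$ have operator norm $\calO(\ve)$. This is an additional argument you did not give.

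Two minor points: the two terms in your commutator decomposition must be subtracted, not added, and the prefactor of your displayed trace should be $-1/\pi$.
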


If $\Sigma_{E_F}$ is a proper nonempty subset of $\TT$, its Euler
characteristic is its number of connected components. The exceptional cases
satisfy
\eq{\chi(\varnothing)=0,\qquad \chi(\TT)=0\,.}
Thus a completely filled lattice band contributes zero rather than one.

We first record the geometry of the lattice Fermi sea.

\begin{lem}\label{lem:lattice Fermi arcs}
If $\Sigma_{E_F}$ is a proper nonempty subset of $\TT$, then, after choosing
a cut in its complement and lifting to $\RR$,
\eq{\Sigma_{E_F}=\bigsqcup_{j=1}^{N}[a_j,b_j],}
where
\eq{f'(a_j)<0<f'(b_j)\,.}
In particular, $\chi(\Sigma_{E_F})=N$.
\end{lem}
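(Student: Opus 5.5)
The plan is to transcribe the argument of \cref{lem:Fermi intervals} to the circle. First I would show that $f^{-1}(E_F)\subset\TT$ is finite. Since $E_F$ is a regular value, every point of $f^{-1}(E_F)$ has $f'\neq0$, so by the implicit function theorem (or simply the mean value theorem) it is an isolated point of the level set. The level set is closed in the compact space $\TT$, so it is compact and discrete, hence finite.

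Next I would use properness of $\Sigma_{E_F}$. Because $\Sigma_{E_F}\neq\TT$, the open set $\TT\setminus\Sigma_{E_F}=\Set{f>E_F}$ is nonempty, and I pick a point $c$ in it. Cutting $\TT$ at $c$ identifies $\TT\setminus\{c\}$ with the open interval $(c,c+2\pi)\subset\RR$, and under this identification $f$ becomes a real-analytic function $\tilde f$ on $[c,c+2\pi]$ with $\tilde f>E_F$ at both endpoints. The set $\Set{\tilde f\le E_F}$ is then a compact subset of the open interval, and its boundary lies in the finite set $f^{-1}(E_F)$. Moreover, no point of $f^{-1}(E_F)$ is an isolated point of $\Sigma_{E_F}$, because $f'\neq0$ there forces $f<E_F$ on one side.

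Consequently $\Sigma_{E_F}$ is a finite disjoint union of nondegenerate compact intervals $[a_j,b_j]$, ordered as $a_1<b_1<a_2<\dots<b_N$. The remaining point is to determine the sign of $f'$ at the endpoints. At a left endpoint $a_j$, $f>E_F$ immediately to the left (either outside $\Sigma_{E_F}$, or near the cut where $\tilde f>E_F$) and $f\le E_F$ immediately to the right. Since $f'(a_j)\neq0$, this forces $f'(a_j)<0$. Symmetrically, $f'(b_j)>0$.

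Finally, a finite disjoint union of $N$ contractible closed arcs has $b_0=N$ and $b_1=0$, so $\chi(\Sigma_{E_F})=N$. I do not expect any real obstacle here. The only point requiring care is that the cut must be placed in the open set $\Set{f>E_F}$ rather than at a point of $f^{-1}(E_F)$, so that no component of $\Sigma_{E_F}$ is split by the cut and the lifted endpoints stay strictly inside the interval.
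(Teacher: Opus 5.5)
Your proposal is correct and takes the same route as the paper's proof: the regular level set is finite because its points are isolated and $\TT$ is compact, and the sign change of $f-E_F$ at the endpoints fixes the signs of $f'(a_j)$ and $f'(b_j)$. You also spell out details the paper leaves implicit, namely placing the cut in the open set $\{f>E_F\}$, ruling out isolated points of $\Sigma_{E_F}$, and computing $b_0=N$, $b_1=0$.
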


\begin{proof}
Since $E_F$ is a regular value, the points of $f^{-1}(E_F)$ are isolated.
Compactness of $\TT$ therefore makes their number finite. At the left
endpoint of each component, $f-E_F$ crosses from positive to negative,
whereas at the right endpoint it crosses from negative to positive. This
gives the asserted signs.
\end{proof}

Unlike in the continuum, the sharp commutator is already trace-class.

\begin{lem}\label{lem:lattice trace class}
For every fixed $t\in\RR$,
\eq{\left[\Lambda(X),\ee^{\ii tH}\right]\in\mathcal{S}_1.}
Consequently,
\eq{\left[
\ee^{\ii tH}\Lambda(X)\ee^{-\ii tH},\Lambda(X)
\right]\in\mathcal{S}_1.}
\end{lem}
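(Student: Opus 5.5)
We need to show $[\Lambda(X),\ee^{\ii tH}]\in\mathcal{S}_1$ on $\ell^2(\ZZ)$, where $H$ is multiplication by a real-analytic $f$ on $\TT$. The plan is to use analyticity to get exponential off-diagonal decay of the matrix elements of $\ee^{\ii tH}$, and then bound the trace norm of the commutator by an absolutely summable double sum.

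First, since $f$ is real analytic on the compact torus, it extends holomorphically to a strip $|\operatorname{Im}p|<\rho$ for some $\rho>0$. Hence so does $g_t:=\ee^{\ii tf}$, with $|g_t|$ bounded on any closed substrip by $\ee^{|t|\sup|\operatorname{Im} f|}$. Shifting the contour in the Fourier coefficient integral gives
\eq{|\hat g_t(m)|\leq C_t\ee^{-\rho'|m|},\qquad 0<\rho'<\rho\,.}
In position space, $\ee^{\ii tH}$ is the Laurent (convolution) operator with matrix elements $\langle\delta_n,\ee^{\ii tH}\delta_{n'}\rangle=\hat g_t(n-n')$, up to the sign convention of $\calF$.

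Next, the commutator has matrix elements $(\chi_{\NN_0}(n)-\chi_{\NN_0}(n'))\hat g_t(n-n')$. These vanish unless exactly one of $n,n'$ lies in $\NN_0$. Write the commutator as a sum of rank-one operators $\hat g_t(n-n')\,|\delta_n\rangle\langle\delta_{n'}|$ over such pairs; each has trace norm $|\hat g_t(n-n')|$. Consider the pairs with $n\geq0>n'$, and set $m=n-n'\geq1$. For each $m$ there are exactly $m$ such pairs, so the total trace norm is at most
\eq{\sum_{m\geq1}m\,|\hat g_t(m)|\leq C_t\sum_{m\geq1}m\,\ee^{-\rho'm}<\infty\,.}
The pairs with $n<0\leq n'$ are handled in the same way. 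Since $\mathcal{S}_1$ is a Banach space, the absolutely convergent sum defines a trace-class operator, and this operator agrees with the commutator because they have the same matrix elements and both are bounded. This proves the first claim.

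For the consequence, write
\eq{\left[\ee^{\ii tH}\Lambda(X)\ee^{-\ii tH},\Lambda(X)\right]=\ee^{\ii tH}\left[\Lambda(X),\ee^{-\ii tH}\Lambda(X)\right]+\left[\ee^{\ii tH},\Lambda(X)\right]\ee^{-\ii tH}\Lambda(X)\,.}
Then expand $[\Lambda,\ee^{-\ii tH}\Lambda]=[\Lambda,\ee^{-\ii tH}]\Lambda$. Every term is a bounded operator times a commutator of the first type, taken at time $t$ or $-t$, so the ideal property of $\mathcal{S}_1$ concludes the argument.

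The only genuinely nontrivial step is the exponential decay of the Fourier coefficients, that is, the Paley--Wiener contour shift. This is where real analyticity, and compactness of $\TT$, are used. By contrast with \cref{rem:continuum trace class}, the weaker condition $\sum_m m\,|\hat g_t(m)|<\infty$ is all that is actually needed, so even $C^{2+\delta}$ smoothness of $f$ would suffice. No obstacle is expected beyond the bookkeeping of counting $m$ pairs per diagonal.
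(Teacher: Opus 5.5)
Your proof of the first claim is the paper's argument: the Fourier coefficients of $\ee^{\ii tf}$ decay exponentially, and the $m$-th diagonal contains exactly $|m|$ sign-changing pairs. The paper phrases this as the commutator of $\Lambda(X)$ with translation by $r$ sites having rank and trace norm $|r|$. This part is correct, including the rank-one decomposition and the identification of the norm-convergent sum with the commutator.

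The ``consequence'' step, however, rests on a false identity. Write $U=\ee^{\ii tH}$. Using $\Lambda(X)^2=\Lambda(X)$, each of your two terms equals $U\Lambda(X)U^*\Lambda(X)-\Lambda(X)$. Your right-hand side is therefore $2\left(U\Lambda(X)U^*\Lambda(X)-\Lambda(X)\right)$, whereas the left-hand side is $U\Lambda(X)U^*\Lambda(X)-\Lambda(X)U\Lambda(X)U^*$. Expanding to first order in $t$ shows that the two sides differ by $\ii t[H,\Lambda(X)]+\calO(t^2)$, which is nonzero for nonconstant $f$. You can also see this from the fact that the left-hand side is skew-adjoint while your right-hand side generally is not.

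The repair is immediate. The Leibniz rule gives
\[
\left[U\Lambda(X)U^*,\Lambda(X)\right]=U\Lambda(X)\left[U^*,\Lambda(X)\right]+\left[U,\Lambda(X)\right]\Lambda(X)U^*,
\]
and after that your ideal-property argument applies verbatim, using the first claim at times $t$ and $-t$. Alternatively, as in the paper, note that $U\Lambda(X)U^*-\Lambda(X)=\left[U,\Lambda(X)\right]U^*\in\mathcal{S}_1$ and that $\left[U\Lambda(X)U^*,\Lambda(X)\right]=\left[U\Lambda(X)U^*-\Lambda(X),\Lambda(X)\right]$.
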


\begin{proof}
The Fourier coefficients of the real-analytic periodic function
$\ee^{\ii tf}$ decay exponentially. Expanding $\ee^{\ii tH}$ in lattice
translations gives
\eq{\norm{\left[\Lambda(X),\ee^{\ii tH}\right]}_1
\leq
\sum_{r\in\ZZ}|r|
\left|\widehat{\ee^{\ii tf}}(r)\right|
<\infty\,.}
Indeed, the commutator of $\Lambda(X)$ with translation by $r$ sites has
rank $|r|$ and trace norm $|r|$.

It follows that
\eq{\ee^{\ii tH}\Lambda(X)\ee^{-\ii tH}-\Lambda(X)
=
\left[\ee^{\ii tH},\Lambda(X)\right]\ee^{-\ii tH}}
is trace-class. Hence
\eq{\left[
\ee^{\ii tH}\Lambda(X)\ee^{-\ii tH},\Lambda(X)
\right]
=
\left[
\ee^{\ii tH}\Lambda(X)\ee^{-\ii tH}-\Lambda(X),
\Lambda(X)
\right]}
is trace-class as well.
\end{proof}

The Abel regularization allows us to calculate the trace using ordinary
operator kernels, without multiplying boundary distributions.

\begin{lem}[The lattice trace formula]\label{lem:lattice trace formula}
For every $\ve>0$ and $t\in\RR$,
\eql{&2\pi\ii\tr\left(
P_F\left[
\ee^{\ii tH}\Lambda_\ve(X)\ee^{-\ii tH},
\Lambda_\ve(X)
\right]P_F
\right)\notag\\
&\qquad=
-\frac{1}{\pi}
\int_{p\in\Sigma_{E_F}}\dif{p}
\int_{k\notin\Sigma_{E_F}}\dif{k}\,
\frac{\sin\left(t(f(p)-f(k))\right)}
{1-2\ee^{-\ve}\cos(p-k)+\ee^{-2\ve}}\,.
\label{eq:regularized lattice trace}}
Moreover,
\eql{&2\pi\ii\tr\left(
P_F\left[
\ee^{\ii tH}\Lambda(X)\ee^{-\ii tH},
\Lambda(X)
\right]P_F
\right)\notag\\
&\qquad=
-\frac{1}{\pi}
\int_{p\in\Sigma_{E_F}}\dif{p}
\int_{k\notin\Sigma_{E_F}}\dif{k}\,
\frac{\sin\left(t(f(p)-f(k))\right)}
{\left|\ee^{\ii p}-\ee^{\ii k}\right|^2}\,.
\label{eq:sharp lattice trace}}
Consequently, the sharp trace in \cref{eq:sharp lattice trace} is the
$\ve\to0^+$ limit of \cref{eq:regularized lattice trace}.
\end{lem}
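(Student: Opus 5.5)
Work entirely in momentum space with the Fourier transform $\calF$. On $L^2(\TT,\dif{p}/2\pi)$, $\ee^{\pm\ii tH}$ and $P_F$ act by multiplication by $\ee^{\pm\ii tf}$ and $\chi_{\Sigma_{E_F}}$. The operator $\Lambda_\ve(X)$ has kernel
\eq{\widehat{\Lambda}_\ve(p,k)=\sum_{n\geq0}\ee^{-\ve n}\ee^{-\ii n(p-k)}=\frac{1}{1-\ee^{-\ve-\ii(p-k)}}}
with respect to $\dif{k}/2\pi$.

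\textbf{Step 1: the regularized formula.} Since $\Lambda_\ve(X)$ is trace-class, every product in the compressed commutator is trace-class, and cyclicity is legitimate. The two factors $P_F\ee^{\ii tH}\Lambda_\ve$ and $\ee^{-\ii tH}\Lambda_\ve P_F$ are Hilbert--Schmidt with continuous kernels, so I would compute the trace as
\eq{\int\!\!\int\frac{\dif{p}\,\dif{k}}{4\pi^2}\,\chi_{\Sigma}(p)\bigl(\ee^{\ii t(f(p)-f(k))}-\ee^{-\ii t(f(p)-f(k))}\bigr)\,\bigl|\widehat{\Lambda}_\ve(p,k)\bigr|^2 .}
Here $|\widehat{\Lambda}_\ve|^2=(1-2\ee^{-\ve}\cos(p-k)+\ee^{-2\ve})^{-1}$. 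The difference of exponentials gives $2\ii\sin(t(f(p)-f(k)))$, and multiplying by $2\pi\ii$ produces the prefactor $-1/\pi$. The integrand is antisymmetric under $p\leftrightarrow k$, so the part over $\Sigma\times\Sigma$ cancels, leaving the stated integral over $\Sigma\times\Sigma^c$. I would fix signs by following the paper's continuum computation in \cref{lem:trace formula}.

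\textbf{Step 2: the sharp formula and the limit $\ve\to0^+$.} The sharp operator is trace-class by \cref{lem:lattice trace class}, but $P_F\Lambda(X)$ is not Hilbert--Schmidt, so the Step 1 factorization is unavailable and I would argue by approximation. I would show
\eq{P_F\bigl[\ee^{\ii tH}\Lambda_\ve\ee^{-\ii tH},\Lambda_\ve\bigr]P_F\;\longrightarrow\;P_F\bigl[\ee^{\ii tH}\Lambda\ee^{-\ii tH},\Lambda\bigr]P_F}
in trace norm, by rewriting the commutator as in \cref{lem:lattice trace class}:
\eq{\bigl[\ee^{\ii tH}\Lambda_\ve\ee^{-\ii tH},\Lambda_\ve\bigr]=\bigl[[\ee^{\ii tH},\Lambda_\ve]\ee^{-\ii tH},\Lambda_\ve\bigr].}
Then $[\ee^{\ii tH},\Lambda_\ve]=\sum_r\widehat{\ee^{\ii tf}}(r)[S_r,\Lambda_\ve]$, where $S_r$ is translation by $r$ sites. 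The operator $[S_r,\Lambda_\ve]$ is a weighted multiplication whose weights $\chi_{\NN_0}(n)\ee^{-\ve n}-\chi_{\NN_0}(n-r)\ee^{-\ve(n-r)}$ have $\ell^1$ norm bounded by $C|r|$ uniformly in $\ve$, and they converge in $\ell^1$ as $\ve\to0^+$. Exponential decay of the Fourier coefficients then gives trace-norm convergence by dominated convergence. The other factors converge strongly, and a product (trace-class converging in trace norm)$\times$(strongly convergent bounded) converges in trace norm. So the traces converge.

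On the integral side, the denominator increases to $|\ee^{\ii p}-\ee^{\ii k}|^2$. Near the finitely many Fermi points (\cref{lem:lattice Fermi arcs}), the integrand is dominated by $Ct|p-k|/|p-k|^2=Ct/|p-k|$, which is integrable across each corner of $\Sigma\times\Sigma^c$. Away from the corners it is bounded. Dominated convergence therefore identifies the limit with \cref{eq:sharp lattice trace}, which proves the final assertion.

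\textbf{Main obstacle.} The delicate point is the trace-norm (not merely weak) convergence as $\ve\to0^+$, uniformly enough to pass the trace through the limit. The route via the $\ell^1$ bounds on $[S_r,\Lambda_\ve]$ is the one I would try first. One also has to check that the two denominators agree with $|p-k|^2$ to leading order near the diagonal, so that the dominating function is really $C/|p-k|$ on the torus, including the diagonal points where $p-k\in2\pi\ZZ$.
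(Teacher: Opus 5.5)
Your Step 1 is the paper's computation: the geometric-series kernel, $|\widehat{\Lambda}_\ve|^2=(1-2\ee^{-\ve}\cos(p-k)+\ee^{-2\ve})^{-1}$, the prefactor $2\pi\ii\cdot 2\ii/(4\pi^2)=-1/\pi$, and the antisymmetric cancellation on $\Sigma_{E_F}\times\Sigma_{E_F}$. The structure of your Step 2 would also be a legitimate alternative: show that the regularized traces converge to the sharp trace, then identify the limit of the regularized integrals by dominated convergence.

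\textbf{The gap is in the mechanism you give for the trace side.} The weights of $[S_r,\Lambda_\ve(X)]$ do \emph{not} converge in $\ell^1$. For $r>0$, on $n\geq r$ they equal $\pm\ee^{-\ve n}(\ee^{\ve r}-1)$, while the limiting weights vanish there. Their total mass is $(1-\ee^{-\ve r})/(1-\ee^{-\ve})\to r$. This mass spreads over a length $1/\ve$ and escapes to infinity, so $[S_r,\Lambda_\ve(X)]\to[S_r,\Lambda(X)]$ only strongly. Summing over $r$ does not help. Away from the interface, $[\ee^{\ii tH},\Lambda_\ve(X)]-[\ee^{\ii tH},\Lambda(X)]$ is approximately $\ve\,\ee^{-\ve X}\chi_{\NN_0}(X)\,[X,\ee^{\ii tH}]$. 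This is a multiplication operator of $\ell^1$-norm $\ve/(1-\ee^{-\ve})\to1$ times a fixed nonzero translation-invariant operator, so its trace norm stays of order one for $t\neq0$ and $f$ nonconstant. Hence $D_t^\ve:=[\ee^{\ii tH},\Lambda_\ve(X)]\ee^{-\ii tH}$ does not converge to $D_t$ in trace norm, and your lemma "trace-norm convergent $\times$ strongly convergent" cannot be invoked. The full commutator $[D_t^\ve,\Lambda_\ve(X)]$ does in fact converge in trace norm. However, that holds only because commuting the $\calO(\ve\,\ee^{-\ve X})$ tail with the slowly varying $\Lambda_\ve(X)$ gains a second factor of $\ve$, and that estimate is not in your proposal.

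\textbf{How the paper avoids this.} It never regularizes both slots when computing the sharp trace. By cyclicity, justified by \cref{lem:lattice trace class}, the sharp trace equals $\tr(D_t[\Lambda(X),P_F])$ with the sharp trace-class $D_t$. Only this $\Lambda(X)$ is replaced by $\Lambda_\ve(X)$, which is harmless: a fixed trace-class operator times a uniformly bounded, strongly convergent family has convergent trace. For $\ve>0$ one then:
\begin{itemize}
\item evaluates $\tr(D_t[\Lambda_\ve(X),P_F])$ from the bounded kernel $(\ee^{\ii t(f(p)-f(k))}-1)/(1-\ee^{-\ii(p-k)})$ of $D_t$;
\item lets $\ve\to0^+$ under the integral, dominated by $C_t/|p-k|$ near the corners;
\item symmetrizes the two cross terms.
\end{itemize}
This gives the sharp formula directly, and the final assertion is then exactly your dominated-convergence step.

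Alternatively, you can salvage your own route. Write the regularized trace as $\tr(D_t^\ve[\Lambda_\ve(X),P_F])$ and use $\norm{D_t^\ve-D_t}_{\mathrm{HS}}=\calO(\ve^{1/2})$ together with $\norm{[\Lambda_\ve(X),P_F]}_{\mathrm{HS}}=\calO(\log^{1/2}(1/\ve))$; the product tends to zero.

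\textbf{A smaller slip.} The regularized denominator does not increase monotonically to $|\ee^{\ii p}-\ee^{\ii k}|^2$: as a function of $a=\ee^{-\ve}$, the quantity $1-2a\cos\theta+a^2$ dips to $\sin^2\theta$ at $a=\cos\theta$. Nor does it agree with $|p-k|^2$ to leading order when $|p-k|\lesssim\ve$. What your domination actually needs, and what holds, is the uniform lower bound $(1-\ee^{-\ve})^2+2\ee^{-\ve}(1-\cos\theta)\geq c\,\theta^2$ for $|\theta|\leq\pi$ and small $\ve$.
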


\begin{proof}
With respect to the measure $\dif{k}/(2\pi)$, the momentum-space kernel of
the regularized switch is the convergent geometric series
\eq{\widehat{\Lambda_\ve}(p,k)
=
\sum_{n\geq0}\ee^{-\ve n}\ee^{-\ii n(p-k)}
=
\frac{1}{1-\ee^{-\ve-\ii(p-k)}}\,.}
Therefore,
\eq{\widehat{\Lambda_\ve}(p,k)
\widehat{\Lambda_\ve}(k,p)
=
\frac{1}
{1-2\ee^{-\ve}\cos(p-k)+\ee^{-2\ve}}\,.}
Since $\Lambda_\ve(X)$ is trace-class, the trace may be computed directly
from the kernels. This gives
\eq{&2\pi\ii\tr\left(
P_F\left[
\ee^{\ii tH}\Lambda_\ve(X)\ee^{-\ii tH},
\Lambda_\ve(X)
\right]P_F
\right)\\
&\qquad=
-\frac{1}{\pi}
\int_{p\in\Sigma_{E_F}}\dif{p}
\int_{k\in\TT}\dif{k}\,
\frac{\sin\left(t(f(p)-f(k))\right)}
{1-2\ee^{-\ve}\cos(p-k)+\ee^{-2\ve}}\,.}
The integral over $\Sigma_{E_F}\times\Sigma_{E_F}$ vanishes by antisymmetry
under $p\leftrightarrow k$, proving
\cref{eq:regularized lattice trace}.

We next identify its limit with the honest sharp trace. Under the Fourier
transform, the sharp switch is the Hardy projection with boundary kernel
$\sum_{n\geq0}\ee^{-\ii n(p-k)}$. Although this kernel is distributional,
the trace-class difference from \cref{lem:lattice trace class} has the
ordinary kernel
\eql{\left(
\ee^{\ii tH}\Lambda(X)\ee^{-\ii tH}-\Lambda(X)
\right)(p,k)
=
\frac{\ee^{\ii t(f(p)-f(k))}-1}
{1-\ee^{-\ii(p-k)}}\,.
\label{eq:lattice difference kernel}}
The apparent singularity at $p=k$ is removable.

Cyclicity of the trace, now justified by
\cref{lem:lattice trace class}, gives
\eq{&\tr\left(
P_F\left[
\ee^{\ii tH}\Lambda(X)\ee^{-\ii tH},\Lambda(X)
\right]P_F
\right)\\
&\qquad=
\tr\left(
\left(
\ee^{\ii tH}\Lambda(X)\ee^{-\ii tH}-\Lambda(X)
\right)
[\Lambda(X),P_F]
\right)\,.}
To evaluate the last trace, one may replace the second occurrence of
$\Lambda(X)$ by $\Lambda_\ve(X)$ and then let $\ve\to0^+$. This is legitimate
because $\Lambda_\ve(X)\to\Lambda(X)$ strongly, the commutators are uniformly
bounded, and the first factor is trace-class. Using
\cref{eq:lattice difference kernel} and then exchanging $p$ and $k$ in one
of the two cross terms yields
\eq{&\tr\left(
P_F\left[
\ee^{\ii tH}\Lambda(X)\ee^{-\ii tH},\Lambda(X)
\right]P_F
\right)\\
&\qquad=
\frac{\ii}{2\pi^2}
\int_{p\in\Sigma_{E_F}}\dif{p}
\int_{k\notin\Sigma_{E_F}}\dif{k}\,
\frac{\sin\left(t(f(p)-f(k))\right)}
{\left|1-\ee^{-\ii(p-k)}\right|^2}\,.}
Since
\eq{\left|1-\ee^{-\ii(p-k)}\right|^2
=\left|\ee^{\ii p}-\ee^{\ii k}\right|^2,}
this proves \cref{eq:sharp lattice trace}.

It remains to justify the limit in the regularized formula. Near the
diagonal,
\eq{1-2\ee^{-\ve}\cos(p-k)+\ee^{-2\ve}
=
(1-\ee^{-\ve})^2
+2\ee^{-\ve}(1-\cos(p-k))
\geq c\left(\ve^2+|p-k|^2\right).}
Furthermore,
\eq{\left|\sin\left(t(f(p)-f(k))\right)\right|
\leq C_t|p-k|\,.}
After the integral over
$\Sigma_{E_F}\times\Sigma_{E_F}$ has been removed, the remaining domain
meets the diagonal only at the finitely many endpoints of the Fermi arcs.
There the resulting bound $C_t/|p-k|$ is integrable across the corresponding
two-dimensional corner. Dominated convergence therefore gives
\eq{\lim_{\ve\to0^+}
\frac{1}
{1-2\ee^{-\ve}\cos(p-k)+\ee^{-2\ve}}
=
\frac{1}{|\ee^{\ii p}-\ee^{\ii k}|^2}}
inside the trace integral and proves the final assertion.
\end{proof}

It remains to evaluate the large-time limit.

\begin{lem}\label{lem:lattice endpoint contributions}
Suppose that
$\Sigma_{E_F}=\bigsqcup_{j=1}^{N}[a_j,b_j]$ as in
\cref{lem:lattice Fermi arcs}. Then the two endpoints of the $j$-th Fermi
arc contribute
\eq{-\frac{1}{2}\sgn(f'(a_j))
+\frac{1}{2}\sgn(f'(b_j))=1}
to the large-time limit in \cref{eq:sharp lattice trace}.
\end{lem}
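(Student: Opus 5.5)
The plan is to reduce to the continuum endpoint analysis of \cref{lem:endpoint contributions}. The only change is that the kernel $(p-k)^{-2}$ is replaced by $|\ee^{\ii p}-\ee^{\ii k}|^{-2}=(4\sin^2((p-k)/2))^{-1}$. Write
\eq{\frac{1}{|\ee^{\ii p}-\ee^{\ii k}|^2}=\frac{1}{(p-k)^2}+r(p-k),}
where $r(x)=\frac{1}{4\sin^2(x/2)}-\frac{1}{x^2}$ is real analytic and bounded on $|x|\leq\pi$. After lifting with the cut chosen in $\TT\setminus\Sigma_{E_F}$ as in \cref{lem:lattice Fermi arcs}, the $k$-integration runs over the complement of the arcs inside a period. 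We localize at each endpoint: for the $j$-th arc, the endpoint contributions are the integrals over $p\in[a_j,b_j]$ with $k$ in a small neighborhood $[a_j-\eta,a_j)$, respectively $(b_j,b_j+\eta]$. Everything else is at positive distance from the diagonal mod $2\pi$.

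\emph{Step 1: the far region.} Where $|p-k|\geq\eta$ mod $2\pi$, the kernel is bounded. The phase $(p,k)\mapsto f(p)-f(k)$ has gradient vanishing only on a null set, because $f$ is a nonconstant real analytic function on $\TT$ (it is nonconstant since $E_F$ is a regular value attained at the endpoints). The coarea/Riemann--Lebesgue argument of \cref{lem:oscillatory integral} then applies verbatim on the compact domain $\TT\times\TT$, so this region contributes $o(1)$.

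\emph{Step 2: the near region.} The bounded analytic piece $r(p-k)$ times the sine is an $L^1$ amplitude on the corner, so it also vanishes as $t\to\infty$ by the same Riemann--Lebesgue argument. The remaining piece is exactly the continuum integrand $\sin(t(f(p)-f(k)))/(p-k)^2$ restricted to the corner near $a_j$ (resp.\ $b_j$). The proof of \cref{lem:endpoint contributions} is purely local: it uses the change of variables $x=p-k$, $y=p+k-2a_j$; the Taylor estimate of \cref{lem:Taylor expansion} with $\delta_t=t^{-3/4}$; integration by parts in $x$ on $\delta_t<x<\eta$, using that $f'$ is bounded away from zero near $a_j$; and \cref{lem:oscillatory integral} for the remainder. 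It therefore transfers unchanged and yields $-\frac12\sgn(f'(a_j))$ and $\frac12\sgn(f'(b_j))$. By \cref{lem:lattice Fermi arcs} these sum to $1$.

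The main obstacle is bookkeeping rather than analysis. We must make sure that the corner near $a_j$ and the corner near $b_{j-1}$ (which share the gap $(b_{j-1},a_j)$) are separated by choosing $\eta$ less than half the minimal gap and arc length. For a single arc, whose complementary gap wraps around the torus, we must check that the periodic identification does not create an extra near-diagonal region. This holds because, with $\eta<\pi$, the only points where $p$ and $k$ are close mod $2\pi$ lie at the endpoints. We must also verify that the integration by parts boundary terms at $x=\eta$ are harmless, which they are since they are $\calO(1/t)$.
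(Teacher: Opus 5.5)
Your proposal is correct and follows the paper's proof. You kill the region away from the diagonal mod $2\pi$ by the coarea/Riemann--Lebesgue argument of \cref{lem:oscillatory integral}; near each endpoint you split $|\ee^{\ii p}-\ee^{\ii k}|^{-2}=(p-k)^{-2}+\calO(1)$, discard the bounded remainder by Riemann--Lebesgue, and import the continuum computation of \cref{lem:endpoint contributions}, with more explicit bookkeeping of $\eta$ and the wrap-around gap than the paper gives.

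One small correction: your corner $[a_j,b_j]\times[a_j-\eta,a_j)$ contains pairs with $p-k$ as large as $b_j-a_j+\eta$, which may exceed $\pi$. So you need $r$ bounded on $|x|\leq 2\pi-\eta$, not only on $|x|\leq\pi$; this holds since $r$ is continuous on $(-2\pi,2\pi)$.
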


\begin{proof}
Away from the diagonal $p=k$ modulo $2\pi$, the amplitude in
\cref{eq:sharp lattice trace} is integrable and smooth. The argument of
\cref{lem:oscillatory integral}, applied in local coordinates on $\TT^2$,
shows that this part tends to zero as $t\to\infty$.

Near an endpoint, choose lifts of $p$ and $k$ to $\RR$. Then
\eq{\frac{1}{|\ee^{\ii p}-\ee^{\ii k}|^2}
=
\frac{1}{4\sin^2((p-k)/2)}
=
\frac{1}{(p-k)^2}+\calO(1)\,.}
The contribution of the bounded remainder tends to zero by the
Riemann--Lebesgue lemma. The singular term is therefore identical to the
continuum endpoint calculation in \cref{lem:endpoint contributions}. Thus
the left and right endpoints contribute, respectively,
\eq{-\frac{1}{2}\sgn(f'(a_j)),
\qquad
\frac{1}{2}\sgn(f'(b_j))\,.}
The signs in \cref{lem:lattice Fermi arcs} show that each contribution is
$1/2$, proving the claim.
\end{proof}

\begin{proof}[Proof of \cref{thm:lattice response}]
If $\Sigma_{E_F}=\varnothing$, then $P_F=0$ and both responses vanish. If
$\Sigma_{E_F}=\TT$, then $P_F=I$. For the sharp response, set
$D_t=\ee^{\ii tH}\Lambda(X)\ee^{-\ii tH}-\Lambda(X)\in\mathcal{S}_1$ as in
\cref{lem:lattice trace class}. The commutator equals
$[D_t,\Lambda(X)]$, whose trace is zero by cyclicity. For the Abel-regularized
response, both factors are trace-class, so its commutator likewise has zero
trace. Thus both responses vanish, in agreement with $\chi(\TT)=0$.

In the remaining case, combine
\cref{lem:lattice trace formula,lem:lattice endpoint contributions} to obtain
\eq{\lim_{t\to\infty}
2\pi\ii\tr\left(
P_F\left[
\ee^{\ii tH}\Lambda(X)\ee^{-\ii tH},\Lambda(X)
\right]P_F
\right)
=
\frac{1}{2}\sum_{j=1}^{N}
\left(
\sgn(f'(b_j))-\sgn(f'(a_j))
\right)
=N\,.}
By \cref{lem:lattice Fermi arcs}, this is $\chi(\Sigma_{E_F})$. The equality
with the Abel-regularized response follows from
\cref{lem:lattice trace formula}.
\end{proof}

\begin{rem}
The exponential cutoff is not required to make the sharp lattice
commutator trace-class. Its role is instead to replace the distributional
Hardy kernel by an ordinary geometric-series kernel, so that all intermediate
kernel products are classical. The cutoff is removed only after the
antisymmetric contribution from
$\Sigma_{E_F}\times\Sigma_{E_F}$ has been cancelled. No principal-value
prescription or contour integration is needed.
\end{rem}

    \bigskip
	\bigskip
	\noindent\textbf{Acknowledgments.} 
JS was supported in part by NSF grant DMS-2510207 and the “ChatGPT for Academic Researchers program” of OpenAI. We are indebted to Martin Fraas, Gian Michele Graf, Kohei Kawabata, Shinsei Ryu, and Pok Man Tam for stimulating discussions.
	\bigskip
	\appendix
\section{Numerical demonstrations}\label{sec:numerics}

We illustrate the finite-volume approximation of the lattice response and
separate the thermodynamic limit from the large-time limit. This distinction
is necessary because, at every fixed finite volume, the response is a finite
sum of trigonometric functions and is therefore quasiperiodic. In particular,
the limit $t\to\infty$ generally does not exist before the thermodynamic limit
has been taken.

\subsection{Finite-volume prescription}

Let
\eq{\calH_L:=\ell^2(\{-L,\ldots,L\})}
and impose Dirichlet boundary conditions. We consider the finite-volume
Hamiltonian
\eq{(H_{0,L}\psi)_n
=2\psi_n-\psi_{n+1}-\psi_{n-1}\,.}
For the clean calculation below, set $H_L:=H_{0,L}$.
The corresponding infinite-volume dispersion relation is
\eq{f(p)=2-2\cos(p),\qquad p\in\TT\,.}
For $E_F=1$, the Fermi sea is
\eq{\Sigma_{E_F}=[-\pi/3,\pi/3]\subset\TT}
and hence
\eq{\chi(\Sigma_{E_F})=1\,.}

We use the finite-volume switch and Fermi projection
\eq{\Lambda_L:=\chi_{\{0,\ldots,L\}}(X),
\qquad
P_{F,L}:=\chi_{(-\infty,E_F)}(H_L)}
and define
\eql{\calR_L(t):=
2\pi\ii\tr\left(
P_{F,L}\left[
\ee^{\ii tH_L}\Lambda_L\ee^{-\ii tH_L},
\Lambda_L
\right]P_{F,L}
\right).
\label{eq:finite-volume numerical response}}
Since the sharp lattice expression is trace-class, no Abel cutoff is needed
for this calculation.

If $H_L\varphi_a=E_a\varphi_a$, expansion in a complete eigenbasis gives
\eq{\calR_L(t)
=
4\pi
\sum_{\substack{E_a<E_F\\E_b\geq E_F}}
\sin\left(t(E_b-E_a)\right)
\left|\braket{\varphi_a|\Lambda_L|\varphi_b}\right|^2\,.}
It is important here to use the complete eigensystem. Truncating the
eigensystem does not give the Fermi projection and destroys cancellations
between occupied and unoccupied states.

The infinite-volume theorem motivates testing whether
\eq{\lim_{t\to\infty}\lim_{L\to\infty}\calR_L(t)=1\,,}
with the thermodynamic limit taken first. For the numerical illustration,
we use the diagonal sequence
\eq{t_L=L^{1/2}\,.}
It satisfies
\eq{t_L\to\infty,\qquad \frac{t_L}{L}\to0,}
so the observation time diverges while remaining shorter than the time
required for a disturbance to reach the boundary. The maximal group velocity
for the present dispersion is
\eq{\max_{p\in\TT}|f'(p)|=2\,.}

\begin{figure}[ht]
\centering
\begin{tikzpicture}[x=0.037cm,y=12cm]
    \draw[->] (0,0.78) -- (305,0.78) node[right] {$L$};
    \draw[->] (0,0.78) -- (0,1.07)
        node[above] {$\calR_L(L^{1/2})$};

    \foreach \x in {0,50,100,150,200,250,300}{
        \draw (\x,0.78) -- (\x,0.775)
            node[below] {\small $\x$};
    }
    \foreach \y in {0.8,0.9,1.0}{
        \draw (0,\y) -- (-4,\y)
            node[left] {\small $\y$};
    }

    \draw[gray,densely dashed] (0,1) -- (300,1);

    \draw[very thick,ct_blue] plot coordinates {
        (12,0.82328)
        (18,0.93581)
        (27,0.93899)
        (36,0.96787)
        (54,1.02060)
        (72,1.02848)
        (108,0.99545)
        (144,1.00411)
        (216,1.01332)
        (288,0.98677)
    };

    \foreach \x/\y in {
        12/0.82328,
        18/0.93581,
        27/0.93899,
        36/0.96787,
        54/1.02060,
        72/1.02848,
        108/0.99545,
        144/1.00411,
        216/1.01332,
        288/0.98677
    }{
        \fill[ct_blue] (\x,\y) circle[radius=1.6pt];
    }
\end{tikzpicture}
\caption{The clean open-chain response evaluated along
$t_L=L^{1/2}$. The dashed line is
$\chi(\Sigma_{E_F})=1$. The residual oscillation reflects finite-time and
finite-volume effects.}
\label{fig:clean thermodynamic sequence}
\end{figure}

\subsection{Boundary conditions}

Open boundary conditions provide the most direct approximation of the
half-line switch. Periodic boundary conditions require additional care.
Consider the ring
\eq{\calH_L^{\mathrm{per}}
:=\ell^2\left(\ZZ/(2L\ZZ)\right)}
and the half-ring projection
\eq{\Lambda_L^{\mathrm{per}}
:=\chi_{\{0,\ldots,L-1\}}(X)\,.}
Unlike the half-line switch, $\Lambda_L^{\mathrm{per}}$ has two interfaces:
one at the bond $(-1,0)$ and another at $(L-1,L)$.

The contributions from the two interfaces add rather than cancel. Indeed,
\eq{\left[
U(1-\Lambda)U^*,1-\Lambda
\right]
=
[U\Lambda U^*,\Lambda]\,.}
Consequently, before the two interfaces communicate,
\eq{\calR_L^{\mathrm{per}}(t)
\simeq 2\calR_L(t)\,.}
The quantity which should be compared with the single-interface open-chain
response is therefore
\eq{\frac{1}{2}\calR_L^{\mathrm{per}}(t)\,.}

\begin{figure}[ht]
\centering
\begin{tikzpicture}[x=0.36cm,y=3cm]
    \draw[->] (0,0) -- (31.5,0) node[right] {$t$};
    \draw[->] (0,0) -- (0,1.45) node[above] {$\calR(t)$};

    \foreach \x in {0,5,10,15,20,25,30}{
        \draw (\x,0) -- (\x,-0.025)
            node[below] {\small $\x$};
    }
    \foreach \y in {0.5,1.0}{
        \draw (0,\y) -- (-0.25,\y)
            node[left] {\small $\y$};
    }

    \draw[gray,densely dashed] (0,1) -- (30,1);

    \draw[very thick,ct_orange,densely dashed] plot coordinates {
      (0.0,0.00000) (0.5,1.35226) (1.0,1.32897)
      (1.5,0.95759) (2.0,1.14100) (2.5,1.18787)
      (3.0,0.92107) (3.5,0.90782) (4.0,1.00352)
      (4.5,0.93320) (5.0,0.93840) (5.5,1.01342)
      (6.0,0.98910) (6.5,1.00858) (7.0,1.06686)
      (7.5,1.01534) (8.0,0.98977) (8.5,1.04055)
      (9.0,1.00569) (9.5,0.95585) (10.0,0.99340)
      (10.5,0.99666) (11.0,0.96810) (11.5,0.99884)
      (12.0,1.01067) (12.5,0.99118) (13.0,1.01888)
      (13.5,1.02690) (14.0,0.99131) (14.5,1.00387)
      (15.0,1.01971) (15.5,0.98416) (16.0,0.98161)
      (16.5,1.00525) (17.0,0.98869) (17.5,0.98646)
      (18.0,1.00980) (18.5,1.00014) (19.0,0.99762)
      (19.5,1.01964) (20.0,1.00640) (20.5,0.99140)
      (21.0,1.01095) (21.5,1.00431) (22.0,0.98270)
      (22.5,0.99712) (23.0,1.00217) (23.5,0.98730)
      (24.0,0.99933) (24.5,1.00775) (25.0,0.99528)
      (25.5,1.00540) (26.0,1.01398) (26.5,0.99587)
      (27.0,0.99823) (27.5,1.01024) (28.0,0.99385)
      (28.5,0.98913) (29.0,1.00353) (29.5,0.99617)
      (30.0,0.99175)
    };

    \draw[very thick,ct_blue] plot coordinates {
      (0.0,0.00000) (0.5,1.35547) (1.0,1.33450)
      (1.5,0.96461) (2.0,1.14911) (2.5,1.19662)
      (3.0,0.92954) (3.5,0.91517) (4.0,1.00957)
      (4.5,0.93829) (5.0,0.94283) (5.5,1.01740)
      (6.0,0.99311) (6.5,1.01328) (7.0,1.07256)
      (7.5,1.02187) (8.0,0.99685) (8.5,1.04802)
      (9.0,1.01327) (9.5,0.96301) (10.0,0.99975)
      (10.5,1.00223) (11.0,0.97315) (11.5,1.00354)
      (12.0,1.01518) (12.5,0.99592) (13.0,1.02426)
      (13.5,1.03302) (14.0,0.99796) (14.5,1.01085)
      (15.0,1.02691) (15.5,0.99128) (16.0,0.98824)
      (16.5,1.01120) (17.0,0.99410) (17.5,0.99153)
      (18.0,1.01464) (18.5,1.00495) (19.0,1.00276)
      (19.5,1.02540) (20.0,1.01272) (20.5,0.99809)
      (21.0,1.01787) (21.5,1.01133) (22.0,0.98952)
      (22.5,1.00342) (23.0,1.00788) (23.5,0.99262)
      (24.0,1.00441) (24.5,1.01269) (25.0,1.00030)
      (25.5,1.01085) (26.0,1.01999) (26.5,1.00232)
      (27.0,1.00493) (27.5,1.01710) (28.0,1.00072)
      (28.5,0.99570) (29.0,1.00956) (29.5,1.00172)
      (30.0,0.99702)
    };

    \fill[white,opacity=0.85] (17.5,1.20)
        rectangle (29.5,1.40);
    \draw[very thick,ct_blue] (18.2,1.34) -- (20.8,1.34)
        node[right] {\small open};
    \draw[very thick,ct_orange,densely dashed]
        (18.2,1.25) -- (20.8,1.25)
        node[right] {\small $\frac12$ periodic};
\end{tikzpicture}
\caption{Comparison at $L=151$. The solid curve is the open-chain
single-interface response. The dashed curve is one half of the periodic
response, which contains two interfaces. The two agree throughout the
pre-recurrence window.}
\label{fig:boundary condition comparison}
\end{figure}

If the exponential cutoff is retained in finite volume, its length scale
must also remain separated from the boundary. A sufficient diagonal
condition is
\eq{\ve_L\downarrow0,\qquad
t_L\to\infty,\qquad
2t_L+\ve_L^{-1}=o(L)\,.}

\subsection{The Anderson model}

We next consider
\eq{H_{\omega,L}
=
H_{0,L}
+
\sum_{n=-L}^{L}\omega_n\ket{n}\bra{n},
\qquad
\omega_n\sim\operatorname{Unif}[-W/2,W/2]}
with $W=3$. For every $W>0$, the one-dimensional Anderson model with a
non-degenerate iid uniform site potential has almost surely pure point
spectrum with exponentially localized eigenfunctions, and its static
conductivity vanishes
\cite{Kunz1980,frohlichConstructiveProofLocalization1985,Anderson_1958_PhysRev.109.1492,MottTwose1961,LeeRamakrishnan1985}.
Thus there is no reason to expect the clean ballistic response to
persist. 

A thermodynamic sequence must nevertheless be constructed before drawing
any conclusion. For every disorder sample, we first generate a potential on
the largest interval and obtain all smaller systems by restriction about the
same origin. Thus the values at different $L$ approximate one fixed infinite
disorder realization rather than unrelated samples.

For $M=200$ disorder realizations, define
\eq{m_L
:=
\frac{1}{M}\sum_{\alpha=1}^{M}
\calR_{\omega_\alpha,L}(L^{1/2})}
and
\eq{s_L
:=
\left(
\frac{1}{M-1}
\sum_{\alpha=1}^{M}
\left|
\calR_{\omega_\alpha,L}(L^{1/2})-m_L
\right|^2
\right)^{1/2}.}
A deterministic samplewise integer would require both
\eq{m_L\longrightarrow m\in\ZZ,
\qquad
s_L\longrightarrow0\,.}

\begin{figure}[ht]
\centering
\begin{tikzpicture}[x=0.065cm,y=4cm]
    \draw[->] (0,0) -- (172,0) node[right] {$L$};
    \draw[->] (0,0) -- (0,1.06) node[above] {ensemble statistic};

    \foreach \x in {0,40,80,120,160}{
        \draw (\x,0) -- (\x,-0.02)
            node[below] {\small $\x$};
    }
    \foreach \y in {0.25,0.5,0.75,1.0}{
        \draw (0,\y) -- (-2.5,\y)
            node[left] {\small $\y$};
    }

    \draw[gray,densely dashed] (0,1) -- (165,1);

    \draw[very thick,ct_blue] plot coordinates {
        (20,0.24849)
        (40,0.17132)
        (80,0.12932)
        (160,0.03277)
    };
    \foreach \x/\y in {
        20/0.24849,
        40/0.17132,
        80/0.12932,
        160/0.03277
    }{
        \fill[ct_blue] (\x,\y) circle[radius=1.7pt];
    }

    \draw[very thick,ct_orange] plot coordinates {
        (20,0.58519)
        (40,0.55824)
        (80,0.51882)
        (160,0.58262)
    };
    \foreach \x/\y in {
        20/0.58519,
        40/0.55824,
        80/0.51882,
        160/0.58262
    }{
        \fill[ct_orange] (\x,\y) circle[radius=1.7pt];
    }

    \fill[white,opacity=0.85] (75,0.78)
        rectangle (160,1.02);
    \draw[very thick,ct_blue] (80,0.95) -- (94,0.95)
        node[right] {\small mean $m_L$};
    \draw[very thick,ct_orange] (80,0.85) -- (94,0.85)
        node[right] {\small standard deviation $s_L$};
\end{tikzpicture}
\caption{Anderson scaling at $W=3$, $E_F=1$, and
$t_L=L^{1/2}$, using $M=200$ nested disorder samples. The ensemble mean
moves toward zero, but the sample standard deviation does not decrease.
The data therefore give no evidence for a deterministic samplewise integer.
The dashed line denotes the clean value $1$.}
\label{fig:Anderson scaling}
\end{figure}

The displayed data are compatible with dephasing of the disorder-averaged
response toward zero. They do not establish such a limit. More importantly,
the variance remains of order one over the available sizes. Increasing the
volume removes the artificial boundary but does not spatially average the
disorder near the fixed switch interface. It is therefore possible for the
thermodynamic response to remain sample-dependent and oscillatory even when
its disorder average tends to zero.


\subsection{Numerical code}

The following code produces all data appearing in
\cref{fig:clean thermodynamic sequence,fig:boundary condition comparison,fig:Anderson scaling}.

\begin{verbatim}
import numpy as np


EF = 1.0


def open_hamiltonian(L, potential=None):
    """Dirichlet Hamiltonian on {-L,...,L}."""
    N = 2 * L + 1
    H = (
        2 * np.eye(N)
        - np.eye(N, k=1)
        - np.eye(N, k=-1)
    )
    if potential is not None:
        H = H + np.diag(potential)
    return H


def periodic_hamiltonian(L):
    """Clean Hamiltonian on Z/(2L Z)."""
    N = 2 * L
    H = (
        2 * np.eye(N)
        - np.eye(N, k=1)
        - np.eye(N, k=-1)
    )
    H[0, -1] = -1
    H[-1, 0] = -1
    return H


def open_switch(L):
    """Indicator of {0,...,L} in {-L,...,L}."""
    return (
        np.arange(2 * L + 1) >= L
    ).astype(float)


def periodic_switch(L):
    """Indicator of one half of the ring."""
    switch = np.zeros(2 * L)
    switch[:L] = 1.0
    return switch


def response(H, switch, times):
    """
    Return the finite-volume response at every time.

    The complete eigensystem is required.
    """
    times = np.atleast_1d(times)

    energies, vectors = np.linalg.eigh(H)
    occupied = energies < EF

    switch_eigenbasis = (
        vectors.conj().T
        @ (switch[:, None] * vectors)
    )

    gaps = (
        energies[~occupied][None, :]
        - energies[occupied][:, None]
    )
    weights = (
        4 * np.pi
        * np.abs(
            switch_eigenbasis[
                np.ix_(occupied, ~occupied)
            ]
        )**2
    )

    phases = np.sin(
        times[:, None, None] * gaps[None, :, :]
    )
    return np.einsum(
        "tab,ab->t", phases, weights
    )


# --------------------------------------------------
# 1. Clean thermodynamic sequence
# --------------------------------------------------

clean_sizes = np.array([
    12, 18, 27, 36, 54,
    72, 108, 144, 216, 288
])

clean_diagonal = np.empty(len(clean_sizes))

for j, L in enumerate(clean_sizes):
    time = np.sqrt(L)
    clean_diagonal[j] = response(
        open_hamiltonian(L),
        open_switch(L),
        np.array([time]),
    )[0]

print("clean diagonal sequence")
print(np.column_stack((
    clean_sizes,
    np.sqrt(clean_sizes),
    clean_diagonal,
)))


# --------------------------------------------------
# 2. Open versus periodic boundary conditions
# --------------------------------------------------

boundary_L = 151
boundary_times = np.arange(0.0, 30.1, 0.5)

open_curve = response(
    open_hamiltonian(boundary_L),
    open_switch(boundary_L),
    boundary_times,
)

periodic_curve = response(
    periodic_hamiltonian(boundary_L),
    periodic_switch(boundary_L),
    boundary_times,
)

# The periodic switch has two interfaces.
periodic_per_interface = periodic_curve / 2

print("boundary-condition comparison")
print(np.column_stack((
    boundary_times,
    open_curve,
    periodic_per_interface,
)))


# --------------------------------------------------
# 3. Anderson thermodynamic sequence
# --------------------------------------------------

disorder_strength = 3.0
anderson_sizes = np.array([20, 40, 80, 160])
number_of_samples = 200
largest_L = int(anderson_sizes[-1])

anderson_values = np.empty((
    number_of_samples,
    len(anderson_sizes),
))

for sample in range(number_of_samples):
    rng = np.random.default_rng(1000 + sample)

    # One realization on the largest interval.
    full_potential = rng.uniform(
        -disorder_strength / 2,
        disorder_strength / 2,
        2 * largest_L + 1,
    )

    for j, L_value in enumerate(anderson_sizes):
        L = int(L_value)

        # Restrict the same realization about the origin.
        potential = full_potential[
            largest_L - L:
            largest_L + L + 1
        ]

        anderson_values[sample, j] = response(
            open_hamiltonian(L, potential),
            open_switch(L),
            np.array([np.sqrt(L)]),
        )[0]

anderson_mean = anderson_values.mean(axis=0)
anderson_std = anderson_values.std(
    axis=0, ddof=1
)
anderson_standard_error = (
    anderson_std / np.sqrt(number_of_samples)
)

print("Anderson ensemble statistics")
print(np.column_stack((
    anderson_sizes,
    anderson_mean,
    anderson_std,
    anderson_standard_error,
)))
\end{verbatim}

For reference, the last block gives
\eq{
\begin{array}{c|cccc}
L & 20 & 40 & 80 & 160\\ \hline
m_L
&0.24849&0.17132&0.12932&0.03277\\
s_L
&0.58519&0.55824&0.51882&0.58262
\end{array}}
The standard errors of the four displayed means are, respectively,
$0.04138$, $0.03947$, $0.03669$, and $0.04120$.
  
  \begingroup
		\let\itshape\upshape
		\printbibliography
		\endgroup

    \end{document}